\documentclass[a4paper,UKenglish,cleveref, autoref, thm-restate]{lipics-v2021}

\hideLIPIcs  

\title{Treedepth is NP-Hard on Planar Graphs of Maximum Degree 3}

\author{Kolja Kühn}{Institute of Theoretical Informatics, Karlsruhe Institute of Technology, Germany}{kolja.kuehn@kit.edu}{https://orcid.org/0009-0006-0746-6876}{}

\authorrunning{K. Kühn} 

\Copyright{Kolja Kühn} 

\ccsdesc{Theory of computation~Problems, reductions and completeness} 
\ccsdesc{Theory of computation~Graph algorithms analysis}
\ccsdesc{Mathematics of computing~Graph theory}

\keywords{Treedepth, planar graphs, NP-completeness, vertex ranking, elimination trees} 

\category{} 

\relatedversion{} 

\funding{This work was supported by KiKIT — the Helmholtz pilot program for core informatics research at the KIT.}

\nolinenumbers 

\EventEditors{John Q. Open and Joan R. Access}
\EventNoEds{2}
\EventLongTitle{42nd Conference on Very Important Topics (CVIT 2016)}
\EventShortTitle{CVIT 2016}
\EventAcronym{CVIT}
\EventYear{2016}
\EventDate{December 24--27, 2016}
\EventLocation{Little Whinging, United Kingdom}
\EventLogo{}
\SeriesVolume{42}
\ArticleNo{23}

\usepackage[disable]{todonotes}
\usepackage{thm-restate}
\usepackage{complexity}
\usepackage{amsmath}
\usepackage{tikz}
\usepackage{xspace}

\begin{document}

\usetikzlibrary{calc}
\definecolor{edgered}{HTML}{D9534F}
\definecolor{edgeblue}{HTML}{3B82C4}
\definecolor{vfill}{HTML}{F1EFEA}
\definecolor{vredfill}{HTML}{FBE3E1}

\newcommand{\wallcoords}[6]{%
  \foreach \p [count=\i from 0] in {#2} \foreach \n [count=\t from 0] in {#3}
    \coordinate (#1-\p\n) at ($(#4)+\t*(#5)+\i*(#6)$);}
\newcommand{\drawedges}[3]{\edef\edgelist{#3}\foreach \u/\v in \edgelist \draw[#2] (#1-\u) -- (#1-\v);}
\newcommand{\drawvertices}[5]{\foreach \p in {#3} \foreach \n in {#4} \node[#2] at (#1-\p\n) {#5};}
\newcommand{\stairwall}[6]{%
  \foreach \c in {1,...,#3}{%
    \draw[#6] ({#1+(\c-1)*#4},{#2+(\c-1)*#5}) -- ++(0,{(2*#3-1)*#5});
    \ifnum\c<#3 \foreach \j in {1,...,#3}
      \draw[#6] ({#1+(\c-1)*#4},{#2+(2*\j+\c-2)*#5}) -- ++(#4,0); \fi}}

\definecolor{bsred}{HTML}{E3726E}
\definecolor{bsblue}{HTML}{5B9BD5}
\definecolor{bsmix}{HTML}{A985B6}    
\def\bswidth{6.6mm}
\def\bsopacity{0.33}
\newcommand{\bset}[4]{%
  \begin{scope}[transparency group, opacity=\bsopacity]
    \foreach \u/\v in {#3} \draw[#2, line width=\bswidth, line cap=round] (#1-\u) -- (#1-\v);
    \foreach \x in {#4} \fill[#2] (#1-\x) circle[radius=0.5*\bswidth];
  \end{scope}}

\tikzset{
  minor/.style={
    vertex/.style={circle, draw=black!70, fill=vfill, inner sep=0pt, minimum size=4.8mm, font=\scriptsize},
    term/.style={vertex, line width=1.1pt, draw=black},
    ghost/.style={vertex, dashed, draw=black!65, fill=white, text=black!65},
    vdel/.style={ghost, densely dashed, line width=0.9pt, draw=black},
    rminor/.style={vertex, fill=bsred!45, draw=edgered!80!black},
    bminor/.style={vertex, fill=bsblue!45, draw=edgeblue!80!black},
    mminor/.style={vertex, fill=bsmix!55, draw=bsmix!60!black},
    keep/.style={black!80, thick},
    red/.style={edgered, very thick},
    blue/.style={edgeblue, very thick},
    mix/.style={bsmix!75!black, very thick},
    del/.style={black!45, thick, dash pattern=on 3pt off 2.5pt}},
  strands/.style={
    dot/.style={circle,fill,inner sep=1.1pt},
    sdot/.style={circle,fill=teal!70!black,inner sep=1.7pt},
    rdot/.style={circle,draw,fill=white,inner sep=1.4pt},
    tnode/.style={circle,draw,inner sep=1.5pt,minimum size=17pt}}}

\newcommand{\hwallcoords}[3]{\wallcoords{#1}{#2}{#3}{-0.75,0}{0.75,0}{0.75,-1.1}}
\def\WFourEdges{a1/a2,a2/a3,a3/a4,a4/a5,a5/a6,a6/a7,a7/a8,%
  b1/b2,b2/b3,b3/b4,b4/b5,b5/b6,b6/b7,b7/b8,%
  c1/c2,c2/c3,c3/c4,c4/c5,c5/c6,c6/c7,c7/c8,%
  d1/d2,d2/d3,d3/d4,d4/d5,d5/d6,d6/d7,d7/d8,%
  a2/b1,a4/b3,a6/b5,a8/b7,b2/c1,b4/c3,b6/c5,b8/c7,c2/d1,c4/d3,c6/d5,c8/d7}
\newcommand{\wallTikz}{%
    \begin{tikzpicture}[minor]
      \hwallcoords{A}{a,b,c,d}{1,...,8}
      \drawedges{A}{thick, black!75}{\WFourEdges}
      \drawvertices{A}{vertex}{a,b,c,d}{1,...,8}{$\p_{\n}$}
    \end{tikzpicture}%
}

\def\FigTwoDel{a6/b5,b2/b3,b4/b5,b5/b6,b6/b7,b6/c5,c6/d5}
\def\FigTwoRed{a5/a6,a6/a7,c5/c6,c6/c7,d5/d6,d6/d7}
\def\FigTwoBlue{a2/b1,b1/b2,a4/b3,b3/b4,a8/b7,b7/b8}
\def\FigTwoKeep{a1/a2,a2/a3,a3/a4,a4/a5,a7/a8,c1/c2,c2/c3,c3/c4,c4/c5,c7/c8,%
  d1/d2,d2/d3,d3/d4,d4/d5,d7/d8,b2/c1,b4/c3,b8/c7,c2/d1,c4/d3,c8/d7}
\def\WThreeMinor{a1/a2,a2/a3,a3/a4,a4/a7,a7/a8,c1/c2,c2/c3,c3/c4,c4/c7,c7/c8,%
  d1/d2,d2/d3,d3/d4,d4/d7,d7/d8,a2/c1,a4/c3,a8/c7,c2/d1,c4/d3,c8/d7}

\newcommand{\wallMinor}{%
    \begin{tikzpicture}[minor]
      \hwallcoords{B}{a,b,c,d}{1,...,8}
      \bset{B}{bsred}{a5/a6,a6/a7}{a5,a6,a7}
      \bset{B}{bsred}{c5/c6,c6/c7}{c5,c6,c7}
      \bset{B}{bsred}{d5/d6,d6/d7}{d5,d6,d7}
      \bset{B}{bsblue}{a2/b1,b1/b2}{a2,b1,b2}
      \bset{B}{bsblue}{a4/b3,b3/b4}{a4,b3,b4}
      \bset{B}{bsblue}{a8/b7,b7/b8}{a8,b7,b8}
      \drawedges{B}{del}{\FigTwoDel}
      \drawedges{B}{keep}{\FigTwoKeep}
      \drawedges{B}{red}{\FigTwoRed}
      \drawedges{B}{blue}{\FigTwoBlue}
      \drawvertices{B}{vertex}{a,c,d}{1,...,7}{$\p_{\n}$}
      \drawvertices{B}{vertex}{b}{1,2,3,4,7}{$\p_{\n}$}
      \drawvertices{B}{term}{a,b,c,d}{8}{$\p_{\n}$}
      \node[ghost] at (B-b6) {$b_6$};
      \node[vdel] (v) at (B-b5) {$b_5$};
      \node[font=\small] at (3.0,-3.95) {$W_4 - v$};
      \draw[-stealth, very thick, black!75] (7.35,-1.65) -- ++(0.9,0);
      \begin{scope}[xshift=8.9cm]
        \hwallcoords{C}{a,c,d}{1,2,3,4,7,8}
        \drawedges{C}{keep}{\WThreeMinor}
        \drawvertices{C}{vertex}{a,c,d}{1,3}{$\p_{\n}$}
        \drawvertices{C}{vertex}{c,d}{2,4}{$\p_{\n}$}
        \drawvertices{C}{bminor}{a}{2,4}{$\p_{\n}$}
        \drawvertices{C}{rminor}{a,c,d}{7}{$\p_{\n}$}
        \node[bminor, line width=1.1pt, draw=edgeblue!60!black] at (C-a8) {$a_8$};
        \drawvertices{C}{term}{c,d}{8}{$\p_{\n}$}
        \node[font=\small] at (2.25,-3.95) {$W_3$};
      \end{scope}
    \end{tikzpicture}%
}

\newcommand{\cfAnchor}[3]{
  \draw (#1,#2) -- (#1,#3) -- (#1-0.035,#3+0.12);
  \draw (#1,#3) -- (#1+0.035,#3+0.12);
  \foreach \s in {-1,1}
    {\stairwall{#1+\s*0.035}{#3+0.12}{8}{\s*0.026}{0.013}{gray!70!black,line width=0.15pt}}
  \node[rdot] at (#1,#3) {};
  \node[dot] at (#1,#2) {};}
\newcommand{\cfStrand}[5]{
  \draw[teal!70!black,thick] (#1,#2) |- ({(#1+#3)/2},#5) -| (#3,#4);
  \node[dot] at (#1,#2) {}; \node[dot] at (#3,#4) {};
  \node[sdot] at ({(#1+#3)/2},#5) {};}
\newcommand{\constructionFigure}{%
    \begin{tikzpicture}[x=1cm,y=1cm,strands]
      \node[tnode] (u) at (1.0,4.6) {$u$};
      \node[tnode] (v) at (4.95,4.6) {$v$};
      \node[tnode] (w) at (9.45,3.95) {$w$};
      \node[tnode] (z) at (12.8,5.25) {$z$};
      \draw (u) -- node[above]{$c=1$} (v);
      \draw (v) -- node[below,sloped]{$c=2$} (w);
      \draw (v) -- node[above,sloped]{$c=1$} (z);
      \node[anchor=west] at (-0.2,5.7) {$K=3 \rightsquigarrow K+1=4$};
      \foreach \x/\k/\lx/\lbl in {0/5/1.0/u, 3.2/8/4.95/v, 8.2/6/9.45/w, 11.8/5/12.8/z} {
        \stairwall{\x}{-(2*\k-1)*0.25}{\k}{0.5}{0.25}{gray!75}
        \node at (\lx,{-(2*\k-1)*0.25-0.4}) {$W_\lbl\cong W_\k$};
      }
      \foreach \x/\y/\r in {0/0/1.25, 3.7/0.25/2.0, 9.2/0.5/1.5, 12.3/0.25/1.25}
        \foreach \q in {0,...,3}
          {\cfAnchor{\x+\q*0.5}{\y+\q*0.25}{\r}}
      \cfStrand{2.0}{1.0}{3.2}{0}{1.9}       
      \cfStrand{6.2}{1.5}{8.7}{0.25}{2.75}   
      \cfStrand{6.7}{1.75}{8.2}{0}{2.4}      
      \cfStrand{5.7}{1.25}{11.8}{0}{3.1}     
      \node[sdot,label=right:{strand vertex}] at (1.2,-4.85) {};
      \node[rdot,label=right:{anchor root $r$}] at (5.0,-4.85) {};
      \stairwall{8.95}{-4.85}{8}{0.026}{0.013}{gray!70!black,line width=0.15pt}
      \node[right] at (9.15,-4.85) {copy of $W_a=W_8$};
    \end{tikzpicture}%
}

\def\WThreeDel{a6/b5,b2/b3,b4/b5,b5/b6,b6/c5}
\def\WThreeRed{a2/b1,b1/b2,b2/c1,a3/a4,a4/a5,a5/a6,a4/b3,b3/b4,b4/c3,c3/c4,c4/c5,c5/c6}
\def\WThreeBlue{a1/a2,a2/a3,c1/c2,c2/c3}
\newcommand{\doublingGraph}{%
    \begin{tikzpicture}[
        wv/.style={circle, draw=black!70, fill=white, inner sep=0pt, minimum size=2.4mm},
        rv/.style={circle, draw=black!70, fill=white, inner sep=0pt, minimum size=5mm},
        we/.style={thick, black!75},
        wall/.pic={
          \wallcoords{}{a,b,c}{1,...,6}{0,0}{0,-\dy}{\dx,-\dy}
          \drawedges{}{we}{\WThreeDel,\WThreeRed,\WThreeBlue}
          \drawvertices{}{wv}{a,b,c}{1,...,6}{}
        }]
        \def\dx{0.9}    
        \def\dy{0.65}   
        \def\gap{1.3}   
        \def\sep{0.8}   
        \def\alen{0.9}  
        \pgfmathsetmacro{\H}{7*\dy}                    
        \pgfmathsetmacro{\Lx}{\H+2*\sep+\alen+2*\dx}  
        \pgfmathsetmacro{\ay}{0.5*\H}                  
        \pic[cm={0,-1,-1,0,(0,{\ay+\dx})}] at (0,0) {wall};
        \pic[xscale=-1] (L) at (\Lx,\H) {wall};
        \pic (R) at (\Lx+\gap,\H) {wall};
        \coordinate (r) at (\Lx+0.5*\gap,\H+1.1);
        \draw[we] (L-a1) -- (r) -- (R-a1);
        \node[wv] at (L-a1) {};
        \node[wv] at (R-a1) {};
        \node[rv, font=\bfseries] at (r) {r};
        \draw[-stealth, very thick, black!75] ({\H+\sep},\ay) -- ++(\alen,0);
    \end{tikzpicture}%
}

\def\FigFiveRed{c4/c5,c5/c6}
\def\FigFiveBlue{a2/b1,b1/b2}
\def\FigFiveMix{a4/a5,a5/a6,a4/b3,b3/b4}
\def\FigFiveKeep{a1/a2,a2/a3,a3/a4,c1/c2,c2/c3,c3/c4,b2/c1,b4/c3}
\def\WTwoMinor{a1/a2,a2/a3,a3/a6,c1/c2,c2/c3,c3/c6,a2/c1,a6/c3}
\newcommand{\csTrail}[1]{
  \draw[thick, black!75] (#1) -- ++(0,0.3);
  \draw[line width=1.2pt, line cap=round, dash pattern=on 0pt off 2.8pt]
    ($(#1)+(0,0.4)$) -- ++(0,0.3);}
\newcommand{\strandCaseFigure}{%
    \begin{tikzpicture}[x=2cm,y=2cm,minor,
      vertex/.append style={minimum size=2.4mm},
      svertex/.style={circle, fill=teal!70!black, inner sep=0pt, minimum size=3.2mm},
      strand/.style={teal!70!black, very thick},
      sdel/.style={strand, thick, dash pattern=on 3pt off 2.5pt}]
      \def\bswidth{4.4mm}
      \foreach \w/\o in {U/0, V/2.5} {\wallcoords{\w}{a,b,c}{1,...,6}{\o,-1.25}{0,0.25}{0.5,0.25}}
      \foreach \w in {U,V} {
        \bset{\w}{bsmix}{a4/a5,a5/a6,a4/b3,b3/b4}{a4,a5,a6,b3,b4}
        \bset{\w}{bsred}{c4/c5,c5/c6}{c4,c5,c6}
        \bset{\w}{bsblue}{a2/b1,b1/b2}{a2,b1,b2}
      }
      \draw[strand] (U-c6) |- (1.75,0.9) -| (V-a6);
      \draw[sdel] (U-b6) |- (1.75,1.3) -| (V-b6);
      \foreach \w in {U,V} {
        \drawedges{\w}{del}{\WThreeDel}
        \drawedges{\w}{keep}{\FigFiveKeep}
        \drawedges{\w}{red}{\FigFiveRed}
        \drawedges{\w}{blue}{\FigFiveBlue}
        \drawedges{\w}{mix}{\FigFiveMix}
      }
      \csTrail{U-a6}
      \csTrail{V-c6}
      \foreach \w in {U,V} {
        \drawvertices{\w}{vertex}{a,c}{1,2,3,4,5}{}
        \drawvertices{\w}{term}{a,c}{6}{}
        \drawvertices{\w}{vertex}{b}{1,2,3,4}{}
        \drawvertices{\w}{ghost}{b}{5,6}{}
      }
      \node[svertex] at (1.75,0.9) {};
      \node[ghost, minimum size=3.2mm, label=above:{$x$}] at (1.75,1.3) {};
      \node[left=2pt] at (U-b6) {$x'$};
      \node[left=2pt] at (V-b6) {$x''$};
      \node at (0.5,-1.55) {$W_u\cong W_3$};
      \node at (3.0,-1.55) {$W_v\cong W_3$};
      \draw[-stealth, very thick, black!75] (4.0,-0.2) -- (4.5,-0.2);
      \begin{scope}[xshift=10.2cm]
        \foreach \w/\o in {U/0, V/1.5} {\wallcoords{R\w}{a,c}{1,2,3,6}{\o,-0.75}{0,0.25}{0.5,0.25}}
        \draw[strand] (RU-c6) |- (1.0,0.75) -| (RV-a6);
        \foreach \w in {RU,RV} {\drawedges{\w}{keep}{\WTwoMinor}}
        \csTrail{RU-a6}
        \csTrail{RV-c6}
        \foreach \w in {RU,RV} {
          \drawvertices{\w}{vertex}{a,c}{1,3}{}
          \drawvertices{\w}{vertex}{c}{2}{}
          \drawvertices{\w}{bminor}{a}{2}{}
          \drawvertices{\w}{mminor, line width=1.1pt, draw=bsmix!45!black}{a}{6}{}
          \drawvertices{\w}{rminor, line width=1.1pt, draw=edgered!60!black}{c}{6}{}
        }
        \node[svertex] at (1.0,0.75) {};
        \node at (0.25,-1.05) {$W'_u\cong W_2$};
        \node at (1.75,-1.05) {$W'_v\cong W_2$};
      \end{scope}
    \end{tikzpicture}%
}
\newcommand{\wts}{\textsc{Weighted Tree Search}\xspace}
\newcommand{\tdp}{\textsc{Treedepth on Planar Graphs of Degree at most 3}\xspace}
\newcommand{\td}{\operatorname{td}}
\newcommand{\double}{\operatorname{Dbl}}
\newcommand{\wall}{W}

\maketitle

\begin{abstract}
Treedepth is an important parameter in algorithmic graph theory.
It is known that the problem of computing the treedepth of a graph is \NP-hard.
It has been an open problem whether treedepth can be computed in polynomial time when the input graph is restricted to be planar, or of bounded degree.
Assuming $\P \neq \NP,$ we answer both questions in the negative:
Under polynomial-time Turing reductions, it is \NP-hard to compute the treedepth of planar graphs with maximum degree 3.
This degree bound is best possible, since treedepth can be computed efficiently on graphs of maximum degree 2.

\end{abstract}

\subsection*{AI Disclosure}
The idea for the reduction was suggested by Claude Opus 5.5.
Our contribution was to improve the degree bound from 4 to 3 and to simplify the lower bound argument.
The text was written by the author, with AI tools used only for figures and proofreading.
The author (of course) takes full responsibility for the content of this paper.

\section{Introduction}
\emph{Treedepth} is a graph parameter that plays a central role in both algorithmic and structural graph theory.
Intuitively, it measures how efficiently a graph can be deconstructed using vertex deletions.
Consider for example a star graph $S$ with central vertex $v$.
The graph $S - v$ is an independent set, i.e., contains no large connected components.
Indeed, a star graph has small treedepth and the parameter treedepth has even been described as a measure of how similar a graph is to a star graph.

Graphs with bounded treedepth allow for efficient dynamic programming approaches.
Many problems that are \NP-hard in general can be solved efficiently when restricted to graphs of bounded treedepth.
For example, \textsc{3-Coloring}, \textsc{Dominating Set} and \textsc{Vertex Cover} can be solved in time $2^{\mathcal{O}(d)} \cdot n^{\mathcal{O}(1)}$ and space $\mathcal{O}(d \cdot \log n)$ for graphs on $n$ vertices with treedepth at most $d$~\cite{treedepth-space-efficient}.
However, these algorithms require a treedepth decomposition of the graph as input.
Constructing such an optimal decomposition, or even just computing the treedepth of a graph, are problems which have long been known to be \NP-hard~\cite{pothen1988complexity}.
They remain \NP-hard on bipartite graphs~\cite{bodlaender1998rankings}, co-bipartite graphs~\cite{pothen1988complexity} and chordal graphs~\cite{dereniowski2006vertex}.
On the positive side, treedepth can be computed in linear time on trees~\cite{schaffer1989optimal}.
It has been conjectured that the treedepth problem remains \NP-hard on planar graphs and on graphs of bounded degree~\cite{schirrmacher2025elimination}.
Some progress towards these conjectures was made by showing that the weighted variant of treedepth is \NP-hard on trees~\cite{dereniowski2006vertex} and on graphs of maximum degree 10~\cite{dirks2026weightedtreedepthnpcompletegraphs}.
We resolve both questions with the following theorem.
\begin{restatable}{theorem}{hardnessTheorem}
    \tdp is \NP-hard under polynomial-time Turing reductions.
    In particular, unless $\P = \NP$, there is no polynomial-time algorithm that decides for a given planar graph $G$ with maximum degree 3 and an integer $k$, whether $\td(G) \leq k$.
\end{restatable}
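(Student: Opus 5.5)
We reduce from \wts, i.e.\ weighted treedepth on trees, which is \NP-hard by~\cite{dereniowski2006vertex}. Here each vertex $v$ of a tree $T$ carries an integer weight $c(v)\ge 1$. An elimination ordering charges the root-to-leaf chains by the sum of the weights on them, and we ask whether the weighted depth is at most $K$. Since the weights can be assumed polynomially bounded (the hard instances use small integer weights), we may replace weighted objects by explicit gadgets of polynomial size. The figures suggest the shape of the gadgets. Each tree vertex $u$ becomes a wall $W_u$ of suitable height. Each tree edge becomes a subdivided path, a ``strand'', glued to boundary vertices of the two walls. Extra anchors, copies of a large wall $W_a$ hung below a root $r$, pin certain vertices so that deleting them is forced or useless. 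All pieces are subdivided walls, trees of walls and paths, so the instance is planar with maximum degree $3$ provided each attachment point is a degree-2 boundary vertex of a wall. We also need the exact treedepth of walls, $\td(W_k)$, to be known or computable. Where it is not known in closed form, we use the Turing reduction: we compute the needed values $\td(W_k)$ for the small $k$ involved, or calibrate heights so that only relative comparisons matter. This is why the theorem is stated under Turing reductions.

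The steps in order are as follows.

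First, we establish the needed facts about walls. We need a lower bound $\td(W_k)\ge f(k)$ with $f$ strictly increasing. We also need a ``minor-robustness'' lemma: deleting one vertex from $W_{k}$ still leaves a $W_{k-1}$ minor, with prescribed boundary vertices surviving as branch sets, as in the figure showing $W_4-v\to W_3$. Together these say that a wall cannot be cheaply broken by a few deletions, so it behaves like a weight. We then choose wall heights so that $W_u$ contributes weight proportional to $c(u)$ and strands contribute $+1$, which matches the shift $K\to K+1$ in the figure.

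Second, we prove the upper bound. Given an optimal weighted elimination tree of $T$, we eliminate gadgets in the same order. Deleting the strand vertex of an edge separates neighbouring walls, and each wall is then eliminated in $\td(W_{k})$ further levels. This gives $\td(G)\le g(K)$ for an explicit $g$.

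Third, we prove the lower bound, which is the main obstacle. From an elimination tree of $G$ of depth $g(K)$ we must extract a weighted elimination of $T$ of depth at most $K$. The plan is to define, for each tree vertex $u$, the moment its wall first becomes disconnected from a strand. We order tree vertices by these moments. The difficulty is that an elimination may interleave deletions across many walls, or cut strands partially. Here we use the minor lemma inductively: any vertex deleted from $W_u$ before it is separated only lowers its effective height by one, and the corresponding $W_{k-1}$ minor keeps the strands attached, as in the strand-case figure. So we charge every deletion either to a tree vertex or to a strand. The anchors guarantee that strand deletions are never more than one level cheaper than the intended one. Summing along a root-to-leaf path of the derived elimination gives weighted depth at most $K$.

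Finally, the construction has polynomial size because the weights and $K$ are polynomial, so the reduction runs in polynomial time. This yields \NP-hardness of \tdp.
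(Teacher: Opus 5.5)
\textbf{There is a genuine gap: your construction does not encode the source problem.} You reduce from vertex-weighted treedepth on trees. That is not \wts: in the paper, \wts is edge-weighted tree search, which is strongly \NP-hard by Cicalese et al. You let the size of $W_u$ stand for the weight of $u$, with a single strand per tree edge.

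Look at your own upper-bound strategy. Each wall is eliminated only after it has been cut off, and then in parallel with all other walls. So wall sizes never add up along a root-to-leaf path of the elimination tree, and all separation is paid by strands at one level per tree edge. What you compute is roughly an unweighted edge ranking of $T$ plus $\max_u \td(W_u)$, not the weighted treedepth of $T$. Conversely, an elimination that separates through a gadget pays for a vertex separator of $W_u$ between attachment points. That cost depends on where the strands attach, not on $\td(W_u)$.

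The missing idea is that a weight must be realised as a separation cost. The paper puts $c(e)$ parallel strand vertices on each tree edge $e$, so querying $e$ corresponds to a chain of $c(e)$ strand deletions. The walls play a different role from the one you assign them:
\begin{itemize}
\item a backbone $W_v\cong W_{K+1+\deg_c(v)}$ is only a robust hub with exactly one external attachment per terminal;
\item the treedepth is pinned by $K+1$ anchors $W_a^{(2)}$ per tree vertex, each of treedepth exactly $A+1$ with optimal root $r$;
\item the threshold is $B=2K+1+A$.
\end{itemize}

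\textbf{The lower bound, which you rightly call the main obstacle, has no working mechanism.} Ordering tree vertices by the moment their wall loses a strand does not produce a decision tree, because elimination trees branch and the order differs between branches. Your charging scheme also does not handle eliminations that cut anchors off, or cut through backbones, while gadgets are still connected. That is exactly the interleaving problem. Nor do anchors serve to make strand deletions ``never more than one level cheaper''. Their role is to be many components of treedepth $A+1$ that must all be cut off, which makes cutting through gadgets instead of strands too expensive.

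The paper handles this as follows.
\begin{itemize}
\item It enlarges the class to $(T,c,h,a)$-graphs, where $h(v)$ is the current number of anchors at $v$.
\item It introduces the potential $\Psi(T,c,h)=\min\{f(T,c)+\mu(h),\nu(h)\}$, with $\mu(h)=\min_v h(v)$ and $\nu(h)=\min_{u\neq v}(h(u)+h(v))$.
\item By the wall-minor observation ($W_k-v$ contains a $W_{k-1}$ minor preserving the other terminals), deleting any single vertex leaves a minor in the class. That minor has one anchor fewer, one strand fewer, or splits at a weight-$1$ edge. In each case $\Psi$ drops by at most one (for the larger part in the split case).
\item Induction then gives $\td(H)\ge\Psi(T,c,h)+A$.
\item The term $\nu(h)$, initially $2K+2$, accounts for cutting anchors before the tree is separated. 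It yields $\td(G)\ge 2K+2+A=B+1$ whenever $f(T,c)\ge K+1$.
\end{itemize}

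Finally, only one wall value is needed: $A=\td(W_a)$ for the polynomially large $a=K+1+c(T)$, found by binary search over $[a,4a]$ with the oracle. You do not need values for ``small $k$''.
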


This paper is organized as follows.
We give some basic definitions and simple observations in \cref{sec:definitions}.
In \cref{sec:construction} we describe the polynomial transformation used in the reduction.
\cref{sec:proof} shows the main result:
\tdp is \NP-hard by a reduction from \wts.
Finally, we give some open problems in \cref{sec:conclusion}.

\section{Definitions and Simple Observations}
\label{sec:definitions}
All graphs in this paper are simple and undirected.
We use standard terminology from graph theory, following Diestel~\cite{Diestel}.

An important graph for our reduction is the \emph{wall}.
It has large treedepth, a small maximum degree and provides robust connectivity.
For each $k \in \mathbb{N}_0$, we define the $k \times k$ wall $\wall_k$ as the union of $k$ paths $P_1, \dots, P_k$, where $P_i$ has vertices $(i,1), \dots, (i,2k)$, together with additional edges $(i,2j)(i + 1, 2j - 1)$ for $i \in [k - 1], j \in [k]$. See \cref{fig:wall} for an example.
\begin{figure}
    \centering
    \scalebox{0.9}{
        \wallTikz
    }
    \caption{The $4 \times 4$ wall $\wall_4$. We label the paths by letters, so $b_1 = (2,1)$.}
    \label{fig:wall}
\end{figure}
We call the final vertex $(i,2k)$ of path $P_i$ its \emph{terminal}.
We note that the wall $\wall_k$ has maximum degree 3 and each of its $k$ terminals has degree at most 2.
Its robust connectivity can be seen in the following fact.
The graph obtained from $\wall_k$ by deleting any vertex $v \in V(\wall_k)$ contains as a minor the wall $\wall_{k - 1}$.
A stronger version of this property is crucial in our (inductive) hardness proof, so we formalize it in the following observation.
For an intuitive understanding, it suffices to inspect \cref{fig:wall-minor}.
\begin{observation}
\label{obs:prel:wall-contract}
    Let $v = (i,j)$ be any vertex of the wall $\wall_k$ for $k \geq 1$.
    Then $\wall_k - v$ contains a minor model of $\wall_{k - 1}$ with the following property.
    For each $i' < i$, terminal $(i',2k)$ of $\wall_k$ is contained in the branch set for terminal $(i', 2k- 2)$ of $\wall_{k - 1}$, and for each $i' > i$, terminal $(i',2k)$ is contained in the branch set for terminal $(i' - 1, 2k-2)$.
    In particular, there is a bijection between the two sets of terminals (excluding terminal $(i,2k)$).
\end{observation}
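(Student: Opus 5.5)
We construct the minor model explicitly, following the pattern of \cref{fig:wall-minor}. The idea is to delete row $P_i$ together with one column, and let row $i$ supply the vertical connections that vanish with it.

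\textbf{Step 1: choose a column to delete.} Group the columns of $\wall_k$ into consecutive pairs $\{2j-1,2j\}$. We pick a pair that is near $v=(i,j)$: the one containing $j$, or a neighbouring pair when $j$ is at the boundary. The branch sets will be connected subpaths of the surviving rows, so each kept row $P_{i'}$ (with $i'\neq i$) should lose exactly one pair of consecutive positions. The remaining $2k-2$ positions are relabelled $1,\dots,2k-2$ in order.

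\textbf{Step 2: define the branch sets.} The natural choice is that vertex $(i',\ell)$ of $\wall_{k-1}$ gets the corresponding surviving vertex of $P_{i'}$ or $P_{i'+1}$. Deleted vertices of those rows near the removed columns are absorbed into adjacent branch sets so that the sets stay connected paths. Adjacency between the old rows $i-1$ and $i+1$ needs separate treatment, because their vertical edges used to go through row $i$. We route these through the surviving parts of $P_i-v$, so that $P_i-v$ splits into two subpaths. Each vertical "rung" $(i-1,2m)(i,2m-1)$, followed by the path edge $(i,2m-1)(i,2m)$ and the rung $(i,2m)(i+1,2m-1)$, joins $(i-1,2m)$ to $(i+1,2m-1)$. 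We therefore add $(i,2m-1)$ and $(i,2m)$ to one of the two branch sets they connect, as in the figure, where the pair $b_1,b_2$ is attached to $a_2$. The pair on row $i$ containing $v$ is destroyed. This is exactly why a column pair near $j$ must be deleted: the rung broken by $v$ is no longer required.

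\textbf{Step 3: terminals.} Every surviving row $P_{i'}$ keeps its last vertex $(i',2k)$, either as a branch vertex or inside the last branch set. We must make sure the deleted column pair is never the final pair in a way that would separate $(i',2k)$ from the branch set of $(i',2k-2)$ (after reindexing $i'\mapsto i'-1$ for $i'>i$). If the deleted pair is the last one, then $(i',2k-1)$ and $(i',2k)$ are simply absorbed into the branch set of the new position $2k-2$, which still works. In the figure the remaining row-$i$ pair is attached to the row above, and the terminal $b_8$ is merged into $a_8$'s set; this is harmless because $(i,2k)$ is excluded from the bijection.

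\textbf{Main obstacle.} The hard part is the index bookkeeping. We must verify that every edge $(i',2m)(i'+1,2m-1)$ of $\wall_{k-1}$ is realised after the shift in row indices and the removal of one column pair, and that no branch set uses the deleted vertex $v$. This is especially delicate for the rungs adjacent to the removed columns and when $i\in\{1,k\}$; in those boundary cases no rerouting through row $i$ is needed and the claim is immediate. I would first prove the generic case with the deleted pair chosen as the one containing $j$, then handle the boundary cases separately. A clean alternative, if this becomes messy, is to show that $\wall_k-v$ contains a subdivision of $\wall_{k-1}$ with the required terminal correspondence, which gives the minor directly.
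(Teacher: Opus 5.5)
Your plan is correct and is essentially the paper's construction. The paper deletes the row-$i$ pair $(i,2\ell-1),(i,2\ell)$ with $\ell=\lceil j/2\rceil$ and the rungs of column pair $\ell$, merges $(i',2\ell-1),(i',2\ell)$ with a neighbour on every other row (rightwards, or leftwards when $\ell=k$), attaches each surviving row-$i$ pair $(i,2m-1),(i,2m)$ to $(i-1,2m)$, and simply drops row $i$ when $i\in\{1,k\}$. As minor remarks, the ``neighbouring pair'' option in Step~1 is unnecessary, since the pair containing $j$ always works, and it would force you to reroute the rung that $v$ breaks. Also, for $\ell=k$ and $1<i<k$, the leftward-absorbed set of row $i-1$ overlaps the attached pair for $m=k-1$, and your phrase ``absorbed into the branch set of the new position $2k-2$'' correctly resolves this by merging them, exactly as the paper does.
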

\begin{figure}
    \centering
    \scalebox{0.9}{
        \wallMinor
    }
    \caption{A minor model of $\wall_3$ in $\wall_4 - v$ for $v = b_5 = (2,5)$, as in \cref{obs:prel:wall-contract}. Dashed vertices and edges are deleted. Red are branch sets of the first family, blue are branch sets of the second family, remaining vertices are singleton branch sets. Terminals are drawn with thick vertex boundaries.}
    \label{fig:wall-minor}
\end{figure}
\begin{proof}
    The statement trivially holds for $k = 1$ so assume $k \geq 2$ for the remainder of the proof.
    We first define a subgraph $H$ of $\wall_k - v$, then show that $\wall_{k-1}$ is a minor of $H$ with the desired properties.
    For this, first observe that $v$ is one of $(i,2\ell -1), (i,2\ell)$ for $\ell = \lceil \frac{j}{2} \rceil$.
    Define $H$ as the graph obtained from $\wall_k$ by deleting the vertices $(i, 2\ell - 1), (i, 2\ell)$ and edges of the form $(i, 2m)(i,2m + 1)$ for $m \in [k - 1]$, and $(i', 2\ell)(i' + 1, 2\ell -1)$ for $i' \in [k - 1]$.
    See \cref{fig:wall-minor}, with deleted edges and vertices in dashed gray.

    For the second step, we first consider the main case where $1 < i < k$ and $\ell < k$.
    Then define in $H$ the following three families of branch sets:
    \begin{enumerate}
        \item For each $i' \neq i$ the branch set $\{(i', 2\ell - 1), (i',2\ell), (i', 2\ell +1)\}$, highlighted in red in \cref{fig:wall-minor}.
        \item For each $m \neq \ell$ the branch set $\{(i - 1, 2m), (i, 2m-1), (i,2m)\}$ highlighted in blue in \cref{fig:wall-minor}.
        \item For each remaining vertex $x$, a singleton branch set $\{x\}$.
    \end{enumerate}

    We show that contracting the branch sets in $H$ yields $\wall_{k - 1}$ with the desired properties.
    Each remaining path is now a path on $2k - 2$ vertices.
    All edges between singleton branch sets remain unchanged.
    The connecting edges between former vertices $(i - 1, 2m)$ and $(i + 1, 2m-1)$ are given by $(i + 1, 2m- 1)$ being connected to $(i, 2m)$ in $H$, and $(i,2m)$ being in a branch set with $(i - 1,2m)$.
    Finally, the terminal condition holds by construction.

    If $i = 1$ or $i = k$, the second family of branch sets is omitted and path $P_i$ is simply deleted instead.
    
    If $\ell = k$, the first family of branch sets is instead defined as $\{(i', 2\ell - 1), (i',2\ell), (i', 2\ell - 2)\}$ for $i' \neq i$.
    If additionally $1 < i < k$, then two of the branch sets overlap, namely $\{(i - 1, 2k - 1), (i - 1,2k), (i - 1, 2k - 2)\}$ and $\{(i - 1, 2[k - 1]), (i, 2[k-1] - 1), (i,2[k - 1])\}$.
    We merge them into one branch set.
    This set is connected, contains the terminal $(i-1,2k)$, and has as neighbors in $H$ the vertices $(i-1, 2k-3)$ and $(i + 1, 2k-3)$ as required for the resulting terminal of row $i - 1$.
    In each case, the remaining argument is analogous.
\end{proof}

We now define the main protagonists of this paper, the \emph{treedepth} and the \emph{weighted tree search} problems.

\paragraph*{Treedepth}
\emph{Treedepth} ($\td$ for short) is a graph parameter which is defined recursively.
\begin{definition}[Treedepth]
\label{def:treedepth}
    The treedepth of the graph without vertices is $\td(\emptyset) := 0$ and the treedepth of the single-vertex graph $K_1$ is $\td(K_1) := 1$.
    The treedepth of the disjoint union of two graphs $G_1, G_2$ is $\td(G_1 \cup G_2) := \max\{\td(G_1), \td(G_2)\}$.
    The treedepth of a connected graph $G$ on more than one vertex is defined as $\td(G) := 1 + \min_{v \in V(G)}\{\td(G - v)\}$.
\end{definition}

It is easy to see that the treedepth of the wall $\wall_k$ is linear in $k$, in particular, between $k$ and $4k$.\footnote{Treedepth is bounded from below by treewidth, and $\operatorname{tw}(\wall_k) = k$.
The upper bound of $4k$ is obtained by recursively first deleting the middle vertex of each path, followed by deleting on each resulting connected component the middle path of each half-wall, to obtain a wall with half the dimensions.}
It is known that treedepth is monotone under taking minors, i.e., if $H$ is a minor of $G$, then $\td(H) \leq \td(G)$~\cite{Nešetřil2012}.
In particular, using \cref{obs:prel:wall-contract}, it holds that $\td(\wall_{k - 1}) \leq \td(\wall_k - v)$ for any vertex $v \in V(\wall_k)$.%
\footnote{This also gives a direct inductive proof of $\td(\wall_k) \geq k$.}

An equivalent definition for the treedepth of a connected graph is the following.
\begin{definition}[Nešetřil, Ossona de Mendez~\cite{NESETRIL20061022,Nešetřil2012}]
    An \emph{elimination tree} for a connected graph $G$ is a rooted tree $F$ with vertex set $V(G)$ defined recursively:
    If $V(G) = \{x\}$ then $F$ is just $\{x\}$.
    Otherwise, a vertex $r \in V(G)$ is chosen as the root of $F$ and the branches of $F$ at $r$ are elimination trees of the connected components of $G - r$, with each of their roots being connected to $r$ in $F$.

    The treedepth of a connected graph $G$ is the minimum vertex-height of an elimination tree for $G$.
    We call an elimination tree of this minimum height an \emph{optimal elimination tree}.
\end{definition}

An instance $I = (G,k)$ of the \tdp problem consists of a planar graph $G$ with maximum degree 3, and an integer $k$, and asks whether $\td(G) \leq k$.
Using the definition of treedepth via elimination trees, it is easy to see that the \tdp problem is in \NP.
To simplify the write-up, we use the two given definitions interchangeably when talking about treedepth.

For our construction we require as a gadget a graph of large treedepth whose optimal elimination tree is rooted at a specific vertex.
While the wall has large treedepth, it is not known which root vertex attains its optimal treedepth.
To work around this, we define a double wall for which we can pin down an optimal root vertex.
\begin{definition}
\label{def:preliminaries:doubling-graph}
    Let $\wall_k$ be the $k \times k$ wall.
    We define the double wall $\wall_k^{(2)}$ as the graph obtained by adding to the disjoint union of two copies of $\wall_k$ a subdivided edge between their copies of the vertex $(k,2k)$.
    We call this subdivision vertex $r$ the \emph{root} of the double wall.
    See \cref{fig:preliminaries:doubling-graph} for an example.
\end{definition}

\begin{figure}
    \centering
    \doublingGraph
    \caption{Left: The wall $\wall_3$. Right: The double wall $\wall_3^{(2)}$ with root $r$. Note that $\wall_3^{(2)}$ has maximum degree 3.}
    \label{fig:preliminaries:doubling-graph}
\end{figure}

\begin{observation}
\label{obs:preliminaries:td-of-g-G}
    For every integer $k$ it holds that the double wall $\wall_k^{(2)}$ with root $r$ has treedepth $\td(\wall_k^{(2)}) = \td(\wall_k) + 1$.
    Furthermore, $\td(\wall_k^{(2)} - r) = \td(\wall_k)$, i.e., vertex $r$ is one that attains the minimum in the recursive definition of treedepth.
\end{observation}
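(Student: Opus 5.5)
We prove the two bounds $\td(\wall_k^{(2)}) \le \td(\wall_k)+1$ and $\td(\wall_k^{(2)}) \ge \td(\wall_k)+1$ separately, and get the claim about $r$ along the way. Throughout, write $d := \td(\wall_k)$, and call the two copies of the wall $A$ and $B$. We may assume $k \geq 1$; for $k=0$ the graph is degenerate and the statement is read trivially.

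For the upper bound we delete the root $r$. The graph $\wall_k^{(2)} - r$ is the disjoint union $A \cup B$ of two copies of $\wall_k$. By the disjoint-union rule of \cref{def:treedepth}, its treedepth is $\max\{d,d\} = d$. Since $\wall_k^{(2)}$ is connected and has more than one vertex, the recursive definition gives $\td(\wall_k^{(2)}) \le 1 + \td(\wall_k^{(2)} - r) = d+1$. This also establishes $\td(\wall_k^{(2)} - r) = d$ directly.

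For the lower bound, again using connectivity, it suffices to show that $\td(\wall_k^{(2)} - v) \ge d$ for every vertex $v$; the recursive definition then gives $\td(\wall_k^{(2)}) \geq 1 + d$.
\begin{itemize}
\item If $v = r$, this is the computation from the upper bound.
\item If $v \in V(A)$, then $B$ is an untouched subgraph of $\wall_k^{(2)} - v$. Treedepth is monotone under subgraphs, since subgraphs are minors~\cite{Nešetřil2012}, so $\td(\wall_k^{(2)} - v) \ge \td(B) = d$.
\item If $v \in V(B)$, the symmetric argument applies with $A$ in place of $B$.
\end{itemize}
Hence every choice of first deleted vertex leaves treedepth at least $d$, so $\td(\wall_k^{(2)}) = d+1$. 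Combined with $\td(\wall_k^{(2)} - r) = d$, this shows that $r$ attains the minimum in the recursive definition.

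I do not expect a real obstacle here. The only point needing care is to invoke monotonicity for subgraphs and not merely for minors, and this is immediate because subgraphs are minors.
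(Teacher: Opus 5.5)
Your proof is correct and follows the paper's argument. The lower bound comes from the fact that deleting any single vertex leaves an intact copy of $\wall_k$ as a subgraph, combined with subgraph-monotonicity of treedepth. The upper bound, and the claim about $r$, come from $\wall_k^{(2)} - r$ being the disjoint union of two copies of $\wall_k$; your case split on $v = r$, $v \in V(A)$, $v \in V(B)$ just spells out what the paper states in one line.
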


\begin{proof}
    It holds by definition that $\td(\wall_k^{(2)}) = 1 + \min_{v \in V(\wall_k^{(2)})}\{\td(\wall_k^{(2)} - v)\}$.
    For any choice of $v$ it holds that $\wall_k^{(2)} - v$ contains as a subgraph a copy of $\wall_k$.
    Since treedepth is monotone under taking subgraphs, it follows that $\td(\wall_k^{(2)}) \geq \td(\wall_k) + 1$.
    To see equality, as well as the second claim, observe that $\wall_k^{(2)} - r$ consists of two disjoint copies of $\wall_k$.
    Thus, its treedepth is $\td(\wall_k \cup \wall_k) = \td(\wall_k)$ by definition.
    The claim follows.
\end{proof}

\cref{obs:preliminaries:td-of-g-G} highlights a useful intuition for treedepth and elimination trees.
The root $r$ is a cut vertex that splits the graph $\wall_k^{(2)}$ in a balanced way.
This allows the two connected components of $\wall_k^{(2)} - r$ to be deconstructed ``in parallel''.
Similarly, the optimal elimination tree for our construction (defined in \cref{sec:construction}) separates components of large treedepth to deconstruct them in parallel.
The problem then boils down to finding the most efficient way to separate these components.

\paragraph*{Weighted Tree Search}
A \emph{weighted} tree $(T,c)$ has a weight function $c : E(T) \rightarrow \mathbb{N}$ which assigns to each edge of the tree $T$ a positive integer weight.
We further define $c(T) := \sum_{e \in E(T)} c(e)$.
In the \wts problem, we are given a weighted tree $(T,c)$ and a budget $K$, and the goal is to find a secret vertex $v \in V(T)$ using the following operation.
We may query any edge $e$ of $T$ at cost $c(e)$, to reveal in which of the two connected components of $T - e$ the secret vertex $v$ lies.
The search ends when we have narrowed down $v$ to be one specific vertex.
The goal is to minimize the worst-case cost of the entire search.

We use the decision tree formalization defined in~\cite{CICALESE2012100}, where the problem is called the \emph{binary identification problem for weighted trees}.
A decision tree $D$ for a weighted tree $(T,c)$ is a binary tree.
We call the vertices of $D$ \emph{nodes} to disambiguate them from the vertices of $T$.
If $T$ has only one vertex $v$, then $D$ consists of a single node labeled $v$.
If $T$ has at least one edge, then $D$ has a root labeled by some edge $e = uv$ connected to two subtrees which are decision trees for $T_u$ and $T_v$, the connected components of $T - e$.
    
Observe that the definition of the decision tree mirrors that of an elimination tree, except that its inner nodes correspond to edges rather than vertices.%
\footnote{Put differently, removing all leaves of a decision tree $D$ for some tree $T$ yields an elimination tree of the line graph of $T$.}

We define the cost of a decision tree $D$ as 0, if it consists of just one isolated node.
Otherwise, it is rooted at some $T$-edge $e = uv$, and has cost $c(e) + \max\{\operatorname{cost}(D_u), \operatorname{cost}(D_v)\}$ for its subtrees $D_u$ and $D_v$.

For a weighted tree $(T,c)$ we define the search cost function $f$ as the minimum cost of a decision tree $D$ for $(T,c)$.
Equivalently, $f$ can be defined as $f(K_1,c) = 0$ and $f(T,c) = \min_{e \in E(T)}\Big\{c(e) + \max\{f(T_1, c|_{T_1}), f(T_2, c|_{T_2})\}\Big\},$ where $T_1, T_2$ are the connected components of $T - e$.
An instance $((T,c), K)$ of the \wts problem then asks whether $f(T,c) \leq K$.

\begin{theorem}[Cicalese, Jacobs, Laber and Valentim \cite{CICALESE2012100}]
    The \wts problem is strongly \NP-hard.
    That is, the problem remains \NP-hard when the parameter $K$ or, equivalently, the weight function $c$ is encoded in unary.%
    \footnote{For any tree $T$ on more than one vertex, if $K < \max_{e \in E(T)} \{c(e)\}$, then the instance $((T,c),K)$ is trivially a No-instance, while if $K \geq \sum_{e \in E(T)} c(e)$, then the problem is trivially a Yes-instance. 
    Thus, in any non-trivial instance there is a polynomial dependence between $K$ and the sum of edge weights.}
\end{theorem}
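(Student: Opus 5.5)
We plan to show strong \NP-hardness of \wts by a polynomial reduction from \textsc{3-Partition}, which is strongly \NP-hard. An instance consists of positive integers $a_1,\dots,a_{3m}$ with $\sum_i a_i = mB$ and $B/4 < a_i < B/2$, and asks whether they can be split into $m$ triples each summing to $B$. All weights we construct will be polynomial in $m$, $B$ and $\max_i a_i$. Hence the instance stays hard when weights, or equivalently $K$ by the footnote, are written in unary, which is exactly the strong hardness claimed.

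The construction uses the view of a decision tree as a schedule. Along any root-to-leaf path of a decision tree, the queried edges have costs that add up, so the cost of $D$ is the maximum total ``time'' spent before a vertex is identified. First we build a \emph{backbone} path $e_1,\dots,e_m$ of heavy edges of weight $H \gg mB$, separated by hub vertices, and extend it with further heavy gadget edges. These gadgets, for instance additional heavy pendant paths at each hub, are chosen so that under budget $K$ the backbone must be searched in one essentially forced order. This order opens $m$ \emph{time windows} of length exactly $B$ between consecutive heavy queries. Second, for each number $a_i$ we attach a pendant item edge of weight $a_i$ at a suitable central vertex. An item edge must be queried on the search path of every vertex beyond it, so it has to fit into some window before the search reaches its leaf. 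Setting $K = mH + mB$ plus the fixed gadget overhead, a valid 3-partition gives a decision tree of cost at most $K$: query the three item edges of triple $j$ in window $j$, then continue along the backbone. This direction is a direct cost computation via the recursive formula for $f$.

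The converse is the main obstacle. Given a decision tree of cost at most $K$, we must show that the backbone queries really occur in the intended order and that each window holds item edges of total weight at most $B$. Slack then forces exactly $B$ per window, and $B/4<a_i<B/2$ forces exactly three items per window. To rigidify the structure, we would first show that any decision tree can be normalized without increasing cost: heavy edges are queried before light ones whenever they separate heavier subtrees. This is an exchange argument on the recursive definition of $f$. Next, we count the heavy edges on the worst path; with $H$ chosen large, any deviation from the forced order already costs an extra $H > mB$. Once the order is fixed, the windows are determined, and an averaging argument over the root-to-leaf paths of the $m$ item subtrees yields the partition. If the pendant gadgets turn out to be insufficient to force the backbone order, we would replace them by complete binary trees of heavy edges, whose search cost is easy to compute exactly.
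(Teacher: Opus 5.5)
The paper gives no proof of this theorem; it cites Cicalese, Jacobs, Laber and Valentim and adds only the elementary footnote. Your reduction therefore has to stand on its own, and it breaks at its central step: \emph{``an item edge must be queried on the search path of every vertex beyond it, so it has to fit into some window''}.

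For a pendant edge $zw_i$ of weight $a_i$, the only vertex beyond it is the leaf $w_i$, and $w_i$ is identified as soon as the edge is queried. So an item carries no deadline. To see why this is fatal, take any decision tree $D$ for your tree, with all items pendant at the central vertex $z$.
\begin{itemize}
\item An item edge stays in the component containing $z$ until it is queried. Hence every item is queried on the root-to-leaf path $P_z$ of $D$ that ends at $z$.
\item No other query on $P_z$ ever splits off a component containing an item leaf, because $w_i$ is joined to $z$ by the item edge itself.
\item Therefore you can move all item queries to the end of $P_z$. Every other query stays valid and splits off exactly the same component, at no later cumulative cost. The total cost of $P_z$ is unchanged, and each $w_i$ is identified at cost at most that of $z$.
\end{itemize}
After this normalization, the items enter $f(T,c)$ only through $\sum_i a_i = mB$. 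Consequently $f(T,c)$ does not change if the $a_i$ are replaced by any other weights with the same sum. Your forward direction would then certify cost at most $K$ for no-instances of \textsc{3-Partition} with the same $m$ and $B$, whose backbone is identical. No choice of $H$, backbone or order-forcing gadget repairs this, since the argument never used the rest of the tree. The heavy queries can always be made before all items, so the windows are never forced.

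Separately, the converse direction is only announced, not carried out.
\begin{itemize}
\item The order-forcing gadget is left open, with a fallback to complete binary trees.
\item The exchange argument that puts heavy edges before light ones is neither stated precisely nor proved.
\item The averaging argument is not given.
\end{itemize}
A working reduction needs item queries that are genuinely constrained relative to the heavy queries, and heavy queries pinned in time from both sides. On a single root-to-leaf path with a fixed set of queries, each carrying a deadline, earliest-deadline-first scheduling is optimal, so no hardness can arise there. The hardness must instead come from the branching of the decision tree: several root-to-leaf paths with shared prefixes, or queries made inside a future branch before it is split off. Supplying such a construction and its rigorous converse is the entire content of the cited result, and your sketch does not provide it.
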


\paragraph*{Polynomial-Time Turing Reductions}
We show \NP-hardness of our problem, under polynomial-time Turing reductions.
Such a reduction from some problem $\Pi$ to some problem $\Pi'$ solves problem $\Pi$ in polynomial time when given access to an oracle for problem $\Pi'$.
If problem $\Pi'$ were to have a polynomial-time algorithm, then this would yield a polynomial-time algorithm for problem $\Pi$ as well.
Conversely, if problem $\Pi$ is \NP-hard, then $\Pi'$ cannot be solved in polynomial time, unless $\P = \NP$.
\section{Construction}
\label{sec:construction}
We now describe the polynomial transformation which for a given instance $I = ((T,c), K)$ of \wts constructs in polynomial time an equivalent instance $I' = (G,B)$ of \tdp.
More precisely, the transformation additionally takes as input the treedepth $A$ of the wall $\wall_a$ where $a := K + 1 + c(T)$.
The size of $\wall_a$ is polynomial in the size of $I$ due to the unary encoding, however, no polynomial-time algorithm for the treedepth of walls is known.
We resolve this in our reduction by utilizing the oracle for planar treedepth to determine $A$ before applying the transformation given here.
This very step is what prevents us from giving a Karp reduction instead.

So let $I = ((T,c),K)$ and $A := \td(\wall_a)$ be given.
We encode the tree $T$ using a gadget for each vertex of the tree, and encode the edge weights by a corresponding number of connecting paths between the respective vertex gadgets.
The idea is that the operation of querying an edge $e$ at cost $c(e)$ is mapped to the operation of cutting each of the $c(e)$ connecting paths at the same total cost.
When all connecting paths are cut it remains to deconstruct the vertex gadget itself.
Accordingly, the transformed parameter $B$ is chosen as $K$ plus the treedepth of a vertex gadget.

For this approach to work, it needs to be expensive to ``cut through'' a vertex gadget. This ensures that an optimal treedepth decomposition begins by cutting the connecting paths rather than cutting through the vertex gadgets at their ends.
To achieve this, each vertex gadget contains as \emph{backbone} a large wall, providing the required connectivity and making a cut straight through the vertex gadget expensive.
In addition, we connect to this backbone a large number of \emph{anchors} of high treedepth which determine the exact value for the treedepth of the vertex gadget.
An example of the transformation is given in \cref{fig:transformation:example}.

\begin{figure}[ht]
    \centering
    \scalebox{0.9}{
        \constructionFigure
    }
    \caption{Transforming an instance of \wts to one of \tdp.}
    \label{fig:transformation:example}
\end{figure}

We now give a formal definition of the transformation.
For each vertex $v$ of $T$, the transformed graph $G$ contains a wall $\wall_v := \wall_{K + 1 + \deg_c(v)}$ where $\deg_c(v) := \sum_{u \in N_T(v)} c(uv)$ is defined as the sum of the incident edge weights at $v$.
Furthermore, for each $v$ the graph $G$ contains a set of $K + 1$ anchors.
Each anchor is a copy of the double wall $\wall_a^{(2)}$, with root vertex $r$ (see \cref{fig:preliminaries:doubling-graph,def:preliminaries:doubling-graph}).
Each anchor for $v$ is connected to $v$'s backbone by an edge from its root $r$ to a (distinct) terminal of the wall $\wall_v$.
Together, this defines the vertex gadget $G_v$.

It remains to connect the vertex gadgets according to the tree $T$.
For each edge $uv \in E(T)$ the graph $G$ contains $c(uv)$ connecting \emph{strand} vertices.
Each strand vertex is adjacent to one terminal of $\wall_u$ and one terminal of $\wall_v$.
More precisely, the $c(uv)$ strand vertices are connected to a contiguous interval of $c(uv)$ terminals of $\wall_u$ and $\wall_v$, in opposite orders.
This allows the nested drawing of strand vertices shown in \cref{fig:transformation:example}.
Since each backbone $\wall_v$ is a wall of dimension $K + 1 + \deg_c(v)$, each terminal receives exactly one edge leaving the wall.
This finishes the construction of graph $G$.
Observe that $G$ is planar and has maximum degree 3.

We define the problem parameter $B := 2K + 1 + A$, i.e., the instance asks whether $\td(G) \leq B$.
Recall that $A$ was defined as the treedepth of the wall used in each anchor.

\begin{lemma}
    The transformation from a \wts instance $I = ((T,c),K)$ and the value $A := \td(\wall_a)$ for $a := K + 1 + c(T)$, to a \tdp instance $I' = (G,B)$ can be performed in time polynomial in $|I|$.
\end{lemma}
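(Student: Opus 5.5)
The plan is to bound the number of vertices of $G$ by a polynomial in $|I|$, using that the weights are encoded in unary. Each step of the construction is then an explicit enumeration that takes time linear in the output size. The key point is that strong \NP-hardness lets us assume $c$, and hence $K$ (by the footnote on non-trivial instances, or by simply truncating $K$ at $c(T)$), is given in unary. So $c(T)$ and $K$ are bounded by a polynomial in $|I|$. Concretely, if $K \geq c(T)$ we may output a trivial Yes-instance, and otherwise $K < c(T) \leq |I|$. In particular $a = K+1+c(T) = \mathcal{O}(|I|)$.

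Next I would count vertices.

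\begin{itemize}
\item Each backbone $\wall_v$ is a $\wall_{K+1+\deg_c(v)}$ with $K+1+\deg_c(v) \leq a$, so it has $\mathcal{O}(a^2)$ vertices.
\item Each anchor is a double wall $\wall_a^{(2)}$ with $2\cdot 2a^2+1 = \mathcal{O}(a^2)$ vertices.
\item There are $K+1 \leq a$ anchors per tree vertex, so each gadget $G_v$ has $\mathcal{O}(a^3)$ vertices.
\item Over all $|V(T)| \leq |I|$ tree vertices, the gadgets contribute $\mathcal{O}(|I|\cdot a^3)$ vertices.
\item The strand vertices add $c(T) \leq a$ more.
\end{itemize}

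Since $G$ has maximum degree 3, the number of edges is at most $3/2$ times the number of vertices. Overall $|G| = \mathcal{O}(|I|^4)$, which is polynomial.

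Finally, I would argue that the construction is algorithmic in time linear in $|G|$.

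\begin{itemize}
\item Each wall is generated directly from its definition as paths plus the rungs $(i,2j)(i+1,2j-1)$.
\item For each $v$, the terminals of $\wall_v$ are partitioned into $K+1$ terminals for anchors and, for each incident edge $uv$, a contiguous interval of $c(uv)$ terminals. This is possible because $\wall_v$ has exactly $K+1+\deg_c(v)$ terminals, and the assignment can be fixed, for example, in the order of a planar embedding of $T$ around $v$.
\item For each edge $uv$ we attach the $c(uv)$ strand vertices to the two intervals in opposite orders.
\item $B = 2K+1+A$ is computed by one addition, since $A$ is given as input and satisfies $A \leq 4a$.
\end{itemize}

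The only step needing care is the first: making explicit that the unary encoding (strong \NP-hardness) is what keeps $a$ polynomial. Without it the walls would be exponentially large. Everything else is routine bookkeeping.
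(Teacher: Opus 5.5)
Your proof is correct and follows the paper's argument: both use the unary encoding of $K$ and $c$ to bound the size of $G$ polynomially (backbones of dimension at most $a$, $|V(T)|\cdot(K+1)$ anchors of $\mathcal{O}(a^2)$ vertices each, and $c(T)$ strand vertices), and both note that $B$ is immediate once $A$ is given. The only difference is that you also handle a possibly binary-encoded $K$ by truncating it at $c(T)$. The paper simply assumes $K$ is given in unary, and it treats the trivial case $K \geq c(T)$ separately in the proof of the main theorem.
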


\begin{proof}
    Recall that instance $I$ encodes each edge weight and the parameter $K$ in unary, so in particular, it has size at least $c(T) + K$.
    By construction, each backbone $\wall_v$ has dimension at most $a = K + 1 + c(T)$ so in particular, contains at most $2 \cdot (K + 1 + c(T))^2$ vertices, polynomial in $|I|$.
    Furthermore, the graph $G$ contains $|V(T)| \cdot (K + 1)$ anchors, each consisting of two walls $\wall_a$ plus a connecting vertex, all of which is polynomial in $|I|$.
    Finally, the graph $G$ contains $c(T)$ strand vertices, polynomial in $|I|$.
    Thus, the number of vertices (and edges) of $G$ is polynomial in $|I|$ and the graph can be constructed in polynomial time.

    By definition, the parameter $B := 2K + 1 + A$ is also polynomial in size (even when encoded in unary), and it can be computed in polynomial time \emph{if} the treedepth $A$ of the wall $\wall_a$ is given as an input.
\end{proof}
\section{Hardness Proof}
\label{sec:proof}
In this section we show \NP-hardness of \tdp by a reduction from \wts.
We use the transformation described in \cref{sec:construction} and show the following two statements.

\begin{restatable}{lemma}{upperBound}
\label{lem:equivalence:upper-bound}
    Let $I = ((T,c),K)$ be an instance of $\wts$ and $I' = (G,B)$ the transformed instance of \tdp.
    If $f(T,c) \leq K$ then $\td(G) \leq B$, i.e., if $I$ is a Yes-instance, then so is $I'$.
\end{restatable}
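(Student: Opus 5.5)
We build an elimination tree for $G$ by simulating an optimal decision tree $D$ for $(T,c)$ of cost $f(T,c)\le K$, and then bound the treedepth of each vertex gadget separately. Concretely, we prove by induction on $|V(T')|$ the following claim. For every subtree $T'$ of $T$ with decision tree $D'$ of cost $k'$, let $G[T']$ be the subgraph of $G$ induced by the gadgets $G_v$, $v\in V(T')$, together with the strand vertices of edges inside $T'$. Then $\td(G[T'])\le k' + \td(G_v^\ast)$ for every vertex gadget. Here $G_v^\ast$ denotes $G_v$ together with any pendant strand vertices whose partner gadget is no longer present; such pendant vertices are harmless, see below. Applying the claim to $T'=T$ gives $\td(G)\le K+\max_v \td(G_v^\ast)$. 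It then suffices to show $\td(G_v^\ast)\le K+1+A$, which yields $\td(G)\le 2K+1+A=B$.

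The induction step is immediate from the definitions. If $D'$ is rooted at an edge $e=uv$ of weight $c(e)$, we delete the $c(e)$ strand vertices of $e$ one after another. This builds a path of length $c(e)$ at the top of the elimination tree. Afterwards $G[T']$ splits into $G[T'_u]$ and $G[T'_v]$, each with some pendant strand-free attachments. No connection remains between the two sides, because every path between the gadgets of $T'_u$ and $T'_v$ passes through a strand vertex of $e$; this holds since $T$ is a tree and strands only join adjacent gadgets. We then recurse using the two subtrees of $D'$, and the depths add exactly as in the cost recursion for $D'$. In the base case $T'$ is a single vertex $v$, and what remains is $G_v$. Strand vertices of edges leaving $T'$ have already been deleted higher up, so in fact no pendant vertices remain. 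This means we should let $G[T']$ exclude the strands to outside, and then $G_v^\ast=G_v$.

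The main step is the bound $\td(G_v)\le K+1+A$. Recall that $G_v$ consists of the wall $\wall_v$ of dimension $K+1+\deg_c(v)$ plus $K+1$ anchors. Each anchor is attached by an edge from its root $r$ to a terminal. First delete the $K+1$ anchor roots, at cost $K+1$ levels. Each anchor then falls apart into two copies of $\wall_a$, each with treedepth $A$. What remains is the backbone $\wall_v$, a wall of dimension at most $a=K+1+c(T)$. Since $\wall_v$ is a subgraph of $\wall_a$, monotonicity gives $\td(\wall_v)\le A$. The height is therefore at most $K+1+A$.

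The root deletions must form a path in the elimination tree, which is fine: after removing the first $j$ roots the graph stays connected, and we simply keep recursing on the component containing the backbone. Detached wall copies hang off at depth at most $K+1+A$.

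The only subtle point is to check that this ordering does not waste levels: strands are removed before the gadgets, and anchor roots before the walls. I expect the main obstacle to be the bookkeeping of the induction. We must phrase $G[T']$ precisely so that strand vertices to already-separated parts are excluded. We must also make sure the depth accounting uses the maximum over both sides, matching the cost definition of the decision tree.
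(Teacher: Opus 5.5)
Your proposal is correct and follows the paper's proof essentially step for step. You turn a decision tree of cost at most $K$ into the top $K$ levels of an elimination tree by replacing each edge node with a path on its $c(e)$ strand vertices, and you bound each isolated gadget $G_v$ by $K+1+A$ by deleting the $K+1$ anchor roots and using that $W_v$ is a subgraph of $W_a$.

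The only blemishes are expository. The detour through $G_v^\ast$ is unnecessary, as you note yourself. The inductive claim should bound by the maximum over the gadgets in $T'$. Deleting an anchor root does disconnect the graph, since two copies of $W_a$ split off; this is harmless because you continue on the backbone component.
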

\begin{restatable}{corollary}{lowerBound}
\label{cor:equivalence:lower-bound}
    Let $I = ((T,c),K)$ be an instance of $\wts$ and $I' = (G,B)$ the transformed instance of \tdp.
    If $f(T,c) \geq K + 1$ then $\td(G) \geq B + 1$, i.e., if $I$ is a No-instance, then so is $I'$.
\end{restatable}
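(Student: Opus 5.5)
The plan is to derive the corollary from a stronger, inductive lower bound that stays true under vertex deletions. This is the kind of statement the ``robust connectivity'' of \cref{obs:prel:wall-contract} is designed for. I would work with a \emph{generalized construction}, given by the following data.
\begin{itemize}
\item A tree $T$ with weights $c' : E(T) \to \mathbb{N}_0$; zero weights are allowed, and $f$ extends to them verbatim.
\item For each $v$ a wall $\wall'_v$, an anchor set and strand vertices, such that each terminal of $\wall'_v$ carries at most one outgoing edge, to an anchor root or to a strand vertex.
\item Each $v$ retains at least one anchor, each anchor is an intact copy of $\wall_a^{(2)}$, and $v$ has exactly $c'(uv)$ strand vertices towards each neighbour $u$.
\end{itemize}
The claim I would prove is the following: if $H$ is such a generalized construction with $f(T,c') \geq m \geq 1$ and every gadget has at least $m$ anchors, then $\td(H) \geq A + 1 + m$. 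Applying it to $G$ with $m = K+1$ gives $\td(G) \geq A + K + 2 = B+1$, since $G$ is a generalized construction with $f(T,c) \ge K+1$ and $K+1$ anchors per gadget.

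The proof is by induction on $m$. For the base case $m=1$, $f(T,c') \geq 1$ means some edge $uv$ still has a strand vertex. Hence some component of $H$ contains two anchors (one of $u$, one of $v$) joined by a path through the backbones and that strand. Deleting any single vertex of this component leaves an intact copy of $\wall_a^{(2)}$, so by \cref{obs:preliminaries:td-of-g-G} the component has treedepth at least $1 + (A+1) = A+2$.

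For the step $m \geq 2$, pick a component $C$ of $H$ with maximal $f$ value on its corresponding subtree; zero-weight edges split $T$ into such pieces, and $f$ of the whole is the maximum over the pieces. Then $\td(C) = 1 + \td(C - x)$ for some vertex $x$, and I would show that $C - x$ contains as a minor a generalized construction with parameter $m-1$. The case analysis is on $x$.
\begin{itemize}
\item If $x$ is a strand vertex on $uv$, this lowers $c'(uv)$ by one. Lowering one weight by one lowers $f$ by at most one, by a direct induction on the recursive definition of $f$.
\item If $x$ lies in an anchor, discard that anchor. The gadget keeps at least $m-1$ anchors, and $f$ is unchanged.
\item If $x$ lies in a backbone $\wall'_v$, apply \cref{obs:prel:wall-contract} to get a minor $\wall_{k-1}$ in which every terminal except one keeps its outgoing edge via the terminal bijection. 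The lost terminal carried either an anchor, costing one anchor as in the previous case, or a strand, costing at most one unit of $f$ as in the strand case.
\end{itemize}
Since treedepth is minor-monotone, in every case $\td(C - x) \geq A + 1 + (m-1)$, and the claim follows.

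I expect the main obstacle to be setting up the invariants so that the backbone case closes cleanly. The induced minor must again satisfy the definition of a generalized construction: the wall dimension must stay at least the number of used terminals, anchors must remain attached at terminals, and strands must remain attached at terminals. Also, the parameter $m$ should drop only once per deletion, even when the deleted wall vertex is itself a terminal adjacent to a strand. I would handle this by letting the wall dimension be arbitrary, requiring only an injective assignment of outgoing edges to terminals, and checking that \cref{obs:prel:wall-contract} preserves this. I would also check that the anchor count, initially $K+1$, never runs out before $m$ reaches $1$, which the bookkeeping ``at least $m$ anchors'' guarantees.
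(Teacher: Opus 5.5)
Your proposal has a genuine gap, and it sits in the final arithmetic. Since $B = 2K+1+A$, we have $B+1 = A+2K+2$, while your claim with $m=K+1$ only gives $\td(G)\ge A+K+2$. That is $K$ levels too weak. It cannot separate the two cases: $A+K+2\le B$ whenever $K\ge 1$, and Yes-instances already satisfy $\td(G)\le B$.

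The loss is built into your invariant, so it is not a typo. You use one parameter $m$ as a common lower bound on $f(T,c')$ and on every anchor count, and you drop to $m-1$ whenever \emph{either} quantity decreases. Deleting an anchor vertex leaves $f$ unchanged, and cutting a strand leaves all anchor counts unchanged, yet each costs a full unit of $m$. So your induction certifies only roughly the smaller of the two quantities on top of $A+1$. (Your claim is in fact true, but weak.)

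The construction instead relies on the two costs adding up:
\begin{itemize}
\item about $f(T,c)\ge K+1$ levels to isolate the gadgets;
\item then the treedepth of an isolated gadget with $K+1$ anchors, which is at least $A+K+1$.
\end{itemize}
Your statement never credits the fact that a single gadget with $h(v)$ anchors has treedepth at least $A+h(v)$.

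To fix this you need two quantities, but the naive additive invariant $\td(H)\ge A+f(T,c')+\min_v h(v)$ is false in general. An elimination may first detach the anchors of two gadgets $u,v$ that are still joined by strands, at a cost of about $h(u)+h(v)$ levels regardless of $f$. Concretely, take $T$ a single edge of weight $c\ge 2$ with one anchor per side. Deleting the two anchor roots first gives $\td(H)\le A+2$ once $a$ is large compared to $c$, while $f+\mu=c+1$.

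This is why the paper uses the potential $\Psi(T,c,h)=\min\{f(T,c)+\mu(h),\nu(h)\}$, with $\mu(h)=\min_v h(v)$ and $\nu(h)=\min_{u\neq v}\bigl(h(u)+h(v)\bigr)$, and proves $\td(H)\ge\Psi(T,c,h)+A$ (\cref{thm:equivalence:td_from_potential}).

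Your case analysis carries over essentially unchanged: strand, anchor, and backbone via \cref{obs:prel:wall-contract}, plus minor-monotonicity. What you must check instead is that each case lowers $\Psi$ by at most one. The delicate case is cutting the last strand of an edge. There you keep the side of larger potential, using that $f$ drops by at most one while $\mu$ and $\nu$ can only grow under restriction. For $G$ this gives $\Psi\ge\min\{2K+2,2K+2\}$, hence $\td(G)\ge A+2K+2=B+1$.
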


Before proving \cref{lem:equivalence:upper-bound,cor:equivalence:lower-bound}, we use them to show our main result.
\hardnessTheorem*

\begin{proof}
    Assume we are given an oracle that decides \tdp.
    Using this oracle we decide \wts in polynomial time as follows:
    Let an instance $((T,c), K)$ of \wts be given.
    If $K \geq c(T)$ or $K < c(e)$ for some edge $e \in E(T)$, answer Yes or No respectively.
    Otherwise, determine $A := \td(\wall_a)$ by a binary search over $[a,4a]$ using $\mathcal{O}(\log(K + c(T)))$ oracle queries.
    Note for this that $\wall_a$ is planar and has maximum degree 3, so it is a valid oracle input.
    Then, construct in polynomial time from the \wts instance $((T,c), K)$ and the treedepth value $A = \td(\wall_a)$, the instance $(G,B)$ with anchor $\wall_a^{(2)}$ for $a := K + 1 + c(T)$ as described in \cref{sec:construction}.
    Finally, decide $(G, B)$ using one more oracle call.
    By \cref{lem:equivalence:upper-bound,cor:equivalence:lower-bound}, this instance is equivalent to the original \wts instance.
    Thus, we have decided in polynomial time using only a logarithmic number of calls to an oracle for \tdp, an instance of \wts.
    This shows the claim.
\end{proof}

It remains to prove \cref{lem:equivalence:upper-bound,cor:equivalence:lower-bound}.
We begin with \cref{lem:equivalence:upper-bound}, giving an upper bound for the treedepth of the transformed graph $G$.
To this end, we first give an upper bound on the treedepth of a vertex gadget.

\begin{lemma}
\label{lem:equivalence:td-of-vertex-gadget}
    Let $I = ((T,c),K)$ be an instance of $\wts$ and $I' = (G,B)$ the transformed instance of \tdp.
    Then for each vertex $v \in V(T)$ it holds that the vertex gadget $G_v$ has treedepth at most $K + 1 + A$, where $A := \td(\wall_a)$ for $a := K + 1 + c(T)$.
\end{lemma}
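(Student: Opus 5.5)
The plan is to build an elimination tree for $G_v$ explicitly. Its top is a path consisting of the $K+1$ anchor roots, and below that we bound the treedepth of what remains. Recall that $G_v$ consists of the backbone $\wall_v = \wall_{K+1+\deg_c(v)}$ together with $K+1$ anchors. Each anchor is a copy of $\wall_a^{(2)}$, attached by a single edge from its root $r$ to a terminal of $\wall_v$. The strand vertices are not part of $G_v$.

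First I will use the standard consequence of \cref{def:treedepth} that $\td(G) \leq |S| + \td(G - S)$ for any vertex set $S$. This follows by induction on $|S|$: deleting one vertex costs one level in each component that contains it, and does not increase the treedepth of the other components. I apply it with $S$ the set of the $K+1$ anchor roots.

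Next I identify the components of $G_v - S$. In each anchor, the root $r$ is the only vertex adjacent to the backbone, and it is also the only vertex joining the two wall copies (\cref{def:preliminaries:doubling-graph}). Hence $G_v - S$ is the disjoint union of $2(K+1)$ copies of $\wall_a$ and the backbone $\wall_v$. Each copy of $\wall_a$ has treedepth $A$ by definition.

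It remains to show $\td(\wall_v) \leq A$. Since $\deg_c(v) \leq c(T)$, the backbone has dimension $b := K+1+\deg_c(v) \leq a$. I claim $\wall_b$ is a subgraph of $\wall_a$. Take the paths $P_1,\dots,P_b$ restricted to the vertices $(i,1),\dots,(i,2b)$. The rungs $(i,2j)(i+1,2j-1)$ for $i \in [b-1]$, $j \in [b]$ are all present in $\wall_a$. Treedepth is monotone under subgraphs (indeed under minors), so $\td(\wall_v) \leq \td(\wall_a) = A$.

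Combining these, $\td(G_v - S) = \max\{A, \td(\wall_v)\} = A$, and therefore $\td(G_v) \leq K+1+A$. There is no real obstacle. The only points needing care are checking that deleting the roots really disconnects everything as claimed, and the subgraph embedding $\wall_b \subseteq \wall_a$.
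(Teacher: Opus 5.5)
Your proof is correct and follows essentially the same route as the paper. The paper also deletes the $K+1$ anchor roots, leaving $2K+2$ copies of $\wall_a$ plus the backbone $\wall_v$, which is a subgraph of $\wall_a$ because $\deg_c(v) \leq c(T)$. Your version additionally spells out the subgraph embedding $\wall_b \subseteq \wall_a$ and the inequality $\td(G) \leq |S| + \td(G-S)$, both of which the paper leaves implicit.
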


\begin{proof}
    We give a deletion sequence for $G_v$.
    Recall that $G_v$ consists of a wall $\wall_v$ and some $K + 1$ anchors $\wall_a^{(2)}$ attached to $\wall_v$.
    Delete in $K + 1$ steps the root of each anchor.
    What remains are $2K + 2$ disjoint copies of $\wall_a$, two from each anchor, and the backbone $\wall_v$ which is a subgraph of $\wall_a$ since $\deg_c(v) \leq c(T)$.
    Thus, the resulting graph has treedepth $A := \td(\wall_a)$ and the full vertex gadget $G_v$ has treedepth at most $K + 1 + A$.
\end{proof}

Using this upper bound, the treedepth of the full transformed graph is bounded from above by the number of deletion rounds required to isolate each vertex gadget, plus this upper bound of $K + 1 + A$ to deconstruct each vertex gadget.

\upperBound*
\begin{proof}
    As before, define $A := \td(W_a)$ for $a := K + 1 + c(T)$, and recall the definition of $B := 2K + 1 + A$.
    By \cref{lem:equivalence:td-of-vertex-gadget}, in $G$ each vertex gadget $G_v$ has treedepth at most $K + 1 + A$.
    It now suffices to reduce the graph $G$ in $K$ deletion rounds to a disjoint union of vertex gadgets.
    For this, let $D$ be a decision tree for $(T,c)$, of cost at most $K$.
    Construct the top levels of an elimination tree $F$ for $G$ by replacing each node of $D$ belonging to some edge $e$ by a path on the corresponding $c(e)$ strand vertices.
    Finally, replace each leaf node $v$ of $D$ by an elimination tree for $G_v$ of height at most $K + 1 + A$ which exists by \cref{lem:equivalence:td-of-vertex-gadget}.
    Then $F$ has height at most $K + K + 1 + A = B$.
    Also, $F$ is a valid elimination tree for $G$.
    This can be seen by induction.
    If $T$ has just a single vertex $v$, then $D$ contains just one isolated node labeled $v$.
    Then, by construction $F$ is a valid elimination tree of $G_v$.
    Otherwise, the root of $D$ is labeled by some edge $e$ of $T$.
    Then $F$ starts with a path of strand vertices belonging to $e$.
    Deleting the last of these strand vertices splits the remaining subgraph of $G$ into two components, one for each component of $D - e$.
    By induction, for each of them our construction yields a valid elimination tree, both of which are connected to the final strand vertex.
\end{proof}

We now show that if $f(T,c) \geq K + 1$ then $\td(G) \geq B + 1$.
We first give some intuition for why the treedepth of $G$ is large in this case.
For this, recall that $G$ contains many anchors of high treedepth, connected within a vertex gadget via the vertex's backbone, and between different vertex gadgets via the connecting strand vertices.
Intuitively, an optimal elimination tree for $G$ first separates each anchor from the corresponding backbone, then deconstructs each anchor in parallel.
For the first step there are two possibilities.
One is to mirror a decision tree of $(T,c)$ to isolate each vertex, followed by separating each anchor from its vertex backbone in parallel.
The other is to separate each anchor from its vertex's backbone \emph{before} every vertex gadget is isolated, i.e., while some vertex gadgets $G_u$ and $G_v$ are still connected to each other.

The first approach takes $f(T,c)$ levels of the elimination tree to isolate each vertex, and another $K + 1$ levels to separate each anchor.
The second approach takes $2K + 2$ levels for the combined $2K + 2$ anchors of $G_u$ and $G_v$ plus possibly some more levels for the rest of the graph.
In either approach, deconstructing the final anchor takes another $A$ levels, for a total elimination tree height of $f(T,c) + K + 1 + A$, or $2K + 2 + A$.
If $f(T,c) \geq K + 1$, then either option yields an elimination tree of height greater than $B := 2K + 1 + A$.

The challenge in making this rigorous through an inductive proof is that the elimination tree might interleave strand and anchor deletions.
After some anchors are deleted/separated, the resulting graph $G'$ is no longer of the form of a transformed graph.

To account for this, we argue the lower bound over a more general family of graphs, which contains $G$, but which also permits graphs whose vertex gadgets have varying numbers of anchors.
In particular, we add for the pair $(T,c)$ a function $h: V(T) \rightarrow \mathbb{N}_0$ where $h(v)$ keeps track of the number of anchors connected to the backbone $\wall_v$ of vertex $v$.
We then call a graph $H$ a $(T,c,h,a)$-graph if it consists of vertex gadgets connected by strands as prescribed by $(T,c)$ and where the backbone $\wall_v$ of each vertex $v$ is connected to $h(v)$ anchors $\wall_a^{(2)}$.
More specifically, $\wall_v \cong \wall_{h(v) + \deg_c(v)}$ and each terminal of $\wall_v$ is connected to exactly one anchor or strand vertex.
When $T = \{v\}$ and $h(v) = 0$, then the $(T,c,h,a)$-graph contains no vertices.
Initially in our transformation, the graph $G$ is a $(T,c,h_0,a)$-graph where $h_0 \equiv K + 1$.

For each such graph we argue about the number of deletions required to cut off each vertex's anchors.
The two approaches then give two different lower bounds.
Isolating all vertices followed by separating each anchor requires at least $f(T,c) + \mu(h)$ levels, where $\mu(h) := \min_{v \in V(T)} h(v)$.
Alternatively, separating each anchor before all vertices are isolated requires at least $\nu(h)$ levels, where $\nu(h) := \min_{u \neq v \in V(T)} h(u) + h(v)$.
For the case that $T$ contains only one vertex (and so the domain of $h$ has only one element), we define $\nu(h) := \infty$.
Based on this, we define a potential function $\Psi$ which assigns to each $(T,c,h,a)$-graph $H$ a potential $\Psi(T,c,h) := \min\{f(T,c) + \mu(h), \nu(h)\}$ which gives a lower bound for the number of deletions required to separate each anchor.

The argument then runs that deleting the root $x$ of an optimal elimination tree for $H$ reduces the treedepth by exactly one, while reducing the potential by at most 1.
In general, $H - x$ does not have the correct form to be a $(T',c',h',a)$-graph, however, there always exists a minor $H'$ of $H - x$ of the correct form, and which has potential $\Psi(T',c',h') \geq \Psi(T,c,h) - 1$.
Since treedepth is minor-monotone, this shows that $\td(H) = \td(H - x) + 1 \geq \td(H') + 1 \geq \Psi(T',c',h') + A + 1 \geq \Psi(T,c,h) + A$ where the second-last step is by induction.

This leads to the following theorem which we revisit after proving the claimed properties of the potential.

\begin{restatable}{theorem}{tdFromPotential}
\label{thm:equivalence:td_from_potential}
    Let $H$ be a $(T,c,h,a)$-graph with $\Psi(T,c,h) \geq 1$.
    Then $$\td(H) \geq \Psi(T,c,h) + A$$
    where $A := \td(\wall_a).$
\end{restatable}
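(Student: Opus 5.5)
We prove the statement by induction on $|V(H)|$, following the outline sketched just before it. For the base case, suppose $\Psi(T,c,h) = 1$. Then some vertex $v$ has $h(v) \geq 1$, since either $\mu(h) \geq 1$, or $T$ has at least two vertices and $\nu(h) \geq 1$. So $H$ contains an anchor $\wall_a^{(2)}$ as a subgraph, and \cref{obs:preliminaries:td-of-g-G} gives $\td(H) \geq A + 1$. For the inductive step, assume $\Psi(T,c,h) \geq 2$. Since $H$ is nonempty, if it is disconnected we pass to the component containing the anchors; this component is itself a $(T,c,h,a)$-graph, because $T$ is connected and every backbone is nonempty. Let $x$ be a root of an optimal elimination tree, so that $\td(H) = 1 + \td(H - x)$. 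The goal is to exhibit a minor $H'$ of $H - x$ which is a $(T',c',h',a)$-graph with $\Psi(T',c',h') \geq \Psi(T,c,h) - 1 \geq 1$. Induction then gives $\td(H-x) \geq \td(H') \geq \Psi(T,c,h) - 1 + A$, which is the claim.

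The construction of $H'$ splits into cases according to where $x$ lies. In each case we must check that both $f + \mu$ and $\nu$ drop by at most one.

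\emph{Case 1: $x$ lies in an anchor of $v$.} We delete the rest of that anchor and apply \cref{obs:prel:wall-contract} to the terminal of $\wall_v$ to which the anchor was attached, obtaining a minor of $\wall_{h(v)-1+\deg_c(v)}$. The observation's bijection on terminals keeps all other anchor and strand attachments, so we get a $(T,c,h',a)$-graph with $h'(v) = h(v) - 1$. Each of $\mu$ and $\nu$ then drops by at most one.

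\emph{Case 2: $x$ lies in a backbone $\wall_v$.} We apply \cref{obs:prel:wall-contract} directly at $x$. This loses the attachment at one terminal $(i,2k)$. If that terminal carried an anchor, we delete the anchor and set $h'(v) = h(v) - 1$ as in Case 1. If it carried a strand of edge $uv$, we delete that strand vertex and contract the corresponding terminal edge in $\wall_u$, which again uses \cref{obs:prel:wall-contract} in $\wall_u$ at that terminal. This sets $c'(uv) = c(uv) - 1$; if $c(uv) = 1$, the edge disappears and $T$ splits.

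\emph{Case 3: $x$ is a strand vertex of edge $uv$.} We contract in $\wall_u$ and $\wall_v$ exactly as in the strand subcase of Case 2, so that $c'(uv) = c(uv) - 1$.

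For the weight decrease we use the standard fact that $f(T,c') \geq f(T,c) - 1$ when a single edge weight drops by one: take an optimal decision tree for $c'$, and the same tree evaluated under $c$ costs at most one more. In this subcase $h$ is unchanged, so $\mu$ and $\nu$ do not change and $\Psi$ drops by at most one. The difficult subcase is when $c(uv) = 1$. Then $H - x$ is disconnected into a $(T_u,\dots)$-graph and a $(T_v,\dots)$-graph, and we keep the component with larger potential. We need
\[
\max\{\Psi(T_u,c,h|_{T_u}), \Psi(T_v,c,h|_{T_v})\} \geq \Psi(T,c,h) - 1.
\]
Here we use $f(T,c) \leq 1 + \max\{f(T_u),f(T_v)\}$, which holds because querying $uv$ first costs $c(uv) = 1$. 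Choose the side $T_u$ attaining the maximum of $f$. Then $f(T_u) + \mu(h|_{T_u}) \geq f(T,c) - 1 + \mu(h)$, and $\nu(h|_{T_u}) \geq \nu(h)$, with $\nu = \infty$ if $T_u$ is a single vertex. Together these give $\Psi(T_u,c,h|_{T_u}) \geq \Psi(T,c,h) - 1$.

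We expect the main obstacle to be making the backbone-and-strand bookkeeping rigorous when we contract inside $\wall_u$ via \cref{obs:prel:wall-contract}. The point to verify is that the observation's terminal bijection preserves which terminal connects to which anchor or strand, so that the result is again a $(T',c',h',a)$-graph with the exact wall dimension $h'(u)+\deg_{c'}(u)$. The nested ordering of strands is irrelevant here, because the definition of a $(T,c,h,a)$-graph imposes no order on the terminal attachments.
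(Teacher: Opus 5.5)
Your proposal is correct and is essentially the paper's proof. It has the same $\Psi = 1$ base case via an anchor subgraph, the same case-by-location construction of a minor $H'$ of $H - x$ using the wall-contraction observation, and the same potential estimates; keeping the side that maximizes $f$ is equivalent to the paper's max--min exchange. The only difference is cosmetic: you induct on $|V(H)|$ instead of on $c(T) + ||h||$, which works equally well because $H'$ is a minor of $H - x$.
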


Before proving \cref{thm:equivalence:td_from_potential}, we show its usage in proving as corollary the desired lower bound claim.

\lowerBound*
\begin{proof}
    The reduction graph $G$ is a $(T,c, h, a)$-graph for $h \equiv K + 1$ and $a := K + 1 + c(T)$, with $A := \td(\wall_a)$.
    By \cref{thm:equivalence:td_from_potential} it follows that $\td(G) \geq \Psi(T,c, h) + A = \min\{f(T,c) + \mu(h), \nu(h)\} + A \geq \min\{(K + 1) + K + 1, 2K + 2\} + A = 2K + 2 + A = B + 1$.
\end{proof}

We now prepare the induction step.
We first describe in \cref{lem:equivalence:minor-of-h} the different graphs that can result from deleting a single vertex of a $(T,c,h,a)$-graph, then, in \cref{lem:equivalence:potential-of-h-prime}, argue that in each case the potential drops at most by 1.

\begin{lemma}
\label{lem:equivalence:minor-of-h}
    Let $H$ be a $(T,c,h,a)$-graph with $\Psi(T,c,h) \geq 2$ and let $x \in V(H)$.
    Then $H - x$ contains as a minor a graph $H'$ which is one of the following:
    \begin{enumerate}
        \item A $(T,c,h',a)$-graph with $0 \leq h'(v) = h(v) - 1$, for some $v \in V(T)$, and $h'(u) = h(u),$ for all $u \neq v$.
        \item A $(T,c',h,a)$-graph with $1 \leq c'(e) = c(e) - 1$, for some $e \in E(T)$, and $c'(e') = c(e'),$ for all $e' \neq e$.
        \item The disjoint union of a $(T_1,c|_{T_1}, h|_{T_1},a)$- and a (possibly empty) $(T_2, c|_{T_2}, h|_{T_2},a)$-graph, for $T_1 \cup T_2 = T-e$ where $c(e) = 1$.
    \end{enumerate}
\end{lemma}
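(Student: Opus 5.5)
We argue by a case distinction on where the deleted vertex $x$ lies in $H$: either (a) in some anchor attached to a backbone $\wall_v$, (b) in some backbone $\wall_v$, or (c) a strand vertex belonging to some edge $e=uv\in E(T)$. In every case we produce $H'$ by deleting further vertices and edges of $H-x$ and contracting suitable connected branch sets. The goal is to restore exactly the structure of a $(T,c',h',a)$-graph: every backbone is a wall of the right dimension, and every terminal has exactly one anchor or strand neighbour.

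Case (a): $x$ lies in an anchor attached to $\wall_v$. Deleting the rest of that anchor removes a single anchor neighbour of one terminal $t=(i,2k)$ of $\wall_v$, where $k=h(v)+\deg_c(v)$. We then apply \cref{obs:prel:wall-contract} to $\wall_v$ with the vertex $t$ itself. This gives a minor model of $\wall_{k-1}$ in $\wall_v - t$, and all other terminals lie in distinct terminal branch sets. Each remaining anchor and strand edge is therefore still attached to a distinct terminal of the new backbone $\wall_{k-1} = \wall_{(h(v)-1)+\deg_c(v)}$. This is outcome~1 with $h'(v)=h(v)-1\ge 0$.

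Case (b): $x$ lies in $\wall_v$. Then \cref{obs:prel:wall-contract} gives $\wall_{k-1}$ in $\wall_v - x$, with every terminal except $(i,2k)$ mapped bijectively into terminal branch sets. We delete whatever hangs off the lost terminal $(i,2k)$. If it was an anchor, we delete the whole anchor and land in outcome~1. If it was a strand vertex of an edge $e=vw$, we delete that strand vertex. This frees one terminal of $\wall_w$, which we repair by applying \cref{obs:prel:wall-contract} to $\wall_w$ at that terminal, exactly as in case (a). If $c(e)\ge 2$ the result is outcome~2. If $c(e)=1$, the gadgets of the two sides of $T-e$ are now disconnected, and we obtain outcome~3 with the two restricted graphs.

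Case (c): $x$ is a strand vertex of $e=uv$. Its removal frees one terminal in each of $\wall_u$ and $\wall_v$. We apply \cref{obs:prel:wall-contract} once in each backbone at the freed terminal, and again obtain outcome~2 if $c(e)\ge2$ and outcome~3 if $c(e)=1$.

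The main obstacle is the bookkeeping in these reductions. After applying \cref{obs:prel:wall-contract}, the contiguous-interval and opposite-order arrangement of strand terminals must be preserved, or at least be irrelevant. Since the definition of a $(T,c,h,a)$-graph only prescribes that each terminal gets exactly one anchor or strand neighbour, the order-preserving bijection of terminals suffices; planarity is not required for the lemma.

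The hypothesis $\Psi\ge2$ is used only to guarantee well-definedness of the outcomes. It gives $\mu(h)\ge1$ or $\nu(h)\ge 2$, so the relevant counts $h(v)$ are positive when we decrement them. It also ensures we never reach the degenerate empty backbone except for the permitted empty side in outcome~3. We check this separately for $T$ with a single vertex, where $\nu=\infty$ and hence $h(v)\ge 2-f=2$.
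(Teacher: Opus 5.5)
Your proposal is correct and follows essentially the same route as the paper. You split on whether $x$ lies in an anchor, in a backbone, or is a strand vertex; you delete whatever hangs off the lost terminal(s) and shrink each affected backbone via \cref{obs:prel:wall-contract}; and outcome~2 or~3 is decided by whether $c(e)\ge 2$ or $c(e)=1$.

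One small correction to your last paragraph: positivity of the decremented $h(v)$ is automatic, since an anchor at $v$ exists in that case. As in the paper, $\Psi(T,c,h)\ge 2$ is only needed to exclude both endpoints of an edge with $c(e)=1$ having a $1\times 1$ backbone, i.e., to ensure that at most one side in outcome~3 is empty.
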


\begin{proof}
    Vertex $x$ is either contained in some anchor, or is a strand vertex, or is contained in some vertex backbone $\wall_v$.
    In all three cases the approach is the same:
    Delete the corresponding anchor or strand, and shrink the vertex backbone(s), using \cref{obs:prel:wall-contract}.

    Consider first the case where $x$ is vertex $(i,j)$ of some vertex backbone $\wall_v$, i.e., contained in that wall's path $P_i$, and that the terminal of $P_i$ is connected to some anchor for vertex $v$.
    In this case, define as $H'$ the graph obtained from $H$ by deleting the connected anchor and contracting $\wall_v - x$ according to \cref{obs:prel:wall-contract}.
    The resulting graph is of type 1:
    Its $h$-function is decremented by 1 for vertex $v$ and unchanged otherwise, and the contracted wall $\wall'_v$ retains all of its other terminals and through them is connected to the remaining anchors and to the strand vertices for vertex $v$.

    The case where $x$ is contained in some anchor is analogous.
    Obtain $H'$ from $H$ by deleting said anchor and the terminal $x'$ it was connected to, followed by contracting $\wall_v - x'$ according to \cref{obs:prel:wall-contract}.
    Once more, $H'$ is of type 1.

    The remaining two possibilities are that $x$ is a strand vertex, or a backbone vertex whose wall-path is connected to a strand vertex.
    These two cases work by the same idea.
    In case $x$ is a strand vertex corresponding to some $T$-edge $uv$, define $x'$ and $x''$ as the connecting terminals of backbones $\wall_u$ and $\wall_v$.
    Then, define $H'$ as the graph obtained from $H$ by deleting $x, x',x''$ and contracting $\wall_u - x'$ and $\wall_v - x''$ according to \cref{obs:prel:wall-contract}.
    See \cref{fig:strand-deletion} for an example.
    \begin{figure}
        \centering
        \strandCaseFigure
        \caption{The modification of graph $H$ in \cref{lem:equivalence:minor-of-h}, when deleting a strand vertex. The terminals $x'$ and $x''$ are deleted and the branch sets contracted. The purple branch set is the special case of the two merged branch sets of \cref{obs:prel:wall-contract}.}
        \label{fig:strand-deletion}
    \end{figure}
    Similarly, if $x$ is contained in backbone $\wall_v$ and its wall-path is connected to some strand vertex $x'$ corresponding to $T$-edge $uv$, define $x''$ as the terminal in $\wall_u$ connected to strand vertex $x'$.
    Then, define $H'$ as the graph obtained from $H$ by deleting $x,x',x''$ and contracting $\wall_u - x''$ and $\wall_v - x$.
    In each case, if $c(uv) \geq 2$, then the resulting graph $H'$ is connected and of type 2.
    If instead $c(uv) = 1$, then there are two cases.
    One case is that both $\wall_u$ and $\wall_v$ have dimensions at least $2 \times 2$ in $H$.
    In this case $H'$ contains two connected components, one for each component of $T - uv$, and is of type 3.
    The other case is that one of $\wall_u$ or $\wall_v$, without loss of generality $\wall_v$, is the $1 \times 1$-wall.
    This is the case when $v$ is a leaf in $T$ and $h(v) = 0$.
    Note that $\Psi(T,c,h) \geq 2$ ensures that this is not the case for both $u$ and $v$.
    In this (degenerate) case, the tree $T_2$ consists of the single vertex $v$ and $h|_{T_2}(v) = 0$.
    Then, $H'$ is of type 3 and the $(T_2, c|_{T_2}, h|_{T_2},a)$-graph is empty.
\end{proof}

\begin{lemma}
\label{lem:equivalence:potential-of-h-prime}
    Let $(T,c)$ be a weighted tree and $h: V(T) \rightarrow \mathbb{N}_0$ a function, and define $\Psi_0 := \Psi(T,c,h).$
    Then it holds
    \begin{enumerate}
        \item $\Psi(T,c,h') \geq \Psi_0 - 1$ when $0 \leq h'(v) = h(v) - 1$, for some $v \in V(T)$, and $h'(u) = h(u),$ for all $u \neq v$.
        \label{lem:potential:case-m}
        
        \item $\Psi(T,c',h) \geq \Psi_0 - 1$ when $1 \leq c'(e) = c(e) - 1$, for some $e \in E(T)$, and $c'(e') = c(e'),$ for all $e' \neq e$.
        \label{lem:potential:case-c}
        
        \item $\max\{\Psi(T_1,c|_{T_1}, h|_{T_1}), \Psi(T_2, c|_{T_2}, h|_{T_2})\} \geq \Psi_0 - 1$, for $T_1 \cup T_2 = T - e$ where $c(e) = 1$.
        \label{lem:potential:case-split}
    \end{enumerate}
\end{lemma}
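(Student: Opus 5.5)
Each of the three claims follows by unfolding $\Psi = \min\{f + \mu, \nu\}$ and bounding each term of the minimum separately. Two monotonicity facts about $f$ do most of the work. First, decreasing one edge weight by $1$ decreases $f$ by at most $1$: along any root-to-leaf path of a decision tree, the edge $e$ is queried at most once, so every decision tree for $(T,c)$ has cost under $c'$ at least its cost under $c$ minus one. Applying this to an optimal tree for $(T,c')$ gives $f(T,c') \geq f(T,c) - 1$. Second, the recursive formula gives $f(T,c) \leq c(e) + \max\{f(T_1,c|_{T_1}), f(T_2,c|_{T_2})\}$ for every edge $e$.

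\emph{Claims 1 and 2.} For claim~\ref{lem:potential:case-m}, decrementing a single value $h(v)$ lowers $\mu$ by at most $1$. It also lowers $\nu$ by at most $1$, since $v$ appears in each pair sum at most once; if $T$ has one vertex, both values stay $\infty$. Since $f$ is unchanged, both terms of the minimum drop by at most one. For claim~\ref{lem:potential:case-c}, $h$ is unchanged, so $\mu$ and $\nu$ are unchanged, and the first monotonicity fact gives $f(T,c') + \mu(h) \geq f(T,c) + \mu(h) - 1$.

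\emph{Claim 3.} Let $e = uv$ with $c(e) = 1$, and let $T_1 \ni u$ and $T_2 \ni v$ be the two components of $T - e$. Choose $i$ so that $f(T_i, c|_{T_i}) = \max_j f(T_j, c|_{T_j})$. The second fact then gives $f(T_i) + 1 \geq f(T,c)$. I will show $\Psi(T_i, c|_{T_i}, h|_{T_i}) \geq \Psi_0 - 1$ by comparing the two terms.
\begin{itemize}
\item \emph{The $\nu$-term.} If $|T_i| \geq 2$, then $\nu(h|_{T_i}) \geq \nu(h) \geq \Psi_0$, because we are minimizing over a subset of pairs. If $|T_i| = 1$, then $\nu(h|_{T_i}) = \infty$.
\item \emph{The $f + \mu$-term.} We have $\mu(h|_{T_i}) \geq \mu(h)$, so $f(T_i) + \mu(h|_{T_i}) \geq f(T,c) - 1 + \mu(h) \geq \Psi_0 - 1$.
\end{itemize}
Together these give $\Psi(T_i, c|_{T_i}, h|_{T_i}) \geq \Psi_0 - 1$, which proves the maximum bound.

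The main obstacle is essentially the first fact; everything else is bookkeeping with minima. For it, I would make the re-weighting argument precise by induction on the recursive definition of $f$. A decision tree's structure does not depend on the weights. Its cost is the maximum over leaves of the total weight of the queried edges on the root-leaf path, and each such path queries each edge at most once, because after querying $e$ the edge lies in neither subtree.
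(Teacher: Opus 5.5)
Your proof is correct and follows the paper's argument essentially step by step. You use the same term-by-term bounds on $\mu$ and $\nu$ in case 1, and the same reweighting of an optimal decision tree for $(T,c')$ to get $f(T,c') \geq f(T,c) - 1$ in case 2. In case 3 you use the same inequality $f(T,c) \leq 1 + \max\{f(T_1,c|_{T_1}), f(T_2,c|_{T_2})\}$ together with the fact that the restricted minima can only grow, and choosing the component $T_i$ with the larger search cost is just the explicit form of the paper's interchange $\max_i \min\{f_i + \mu(h), \nu(h)\} = \min\{\max_i f_i + \mu(h), \nu(h)\}$.
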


\begin{proof}
    Recall that $\Psi(T,c,h) := \min\{f(T,c) + \mu(h), \nu(h)\}$.

    For case \ref{lem:potential:case-m}, note that $f(T,c)$ is unchanged, while $\mu(h') \geq \mu(h) - 1$ and $\nu(h') \geq \nu(h) - 1$ by definition of $h'$ and $\mu, \nu$.
    Thus, $\Psi(T,c,h') = \min\{f(T,c) + \mu(h'), \nu(h')\} \geq \min\{f(T,c) + \mu(h) - 1, \nu(h) - 1\} = \min\{f(T,c) + \mu(h), \nu(h)\} - 1 = \Psi_0 - 1$.

    For case \ref{lem:potential:case-c}, note%
    \footnote{Let $D'$ be a decision tree for $(T,c')$ with cost $f(T,c')$. Then it is also a decision tree for $(T,c)$ with cost at most $f(T,c') + 1$.}
    that $f(T,c') \geq f(T,c) - 1$.
    Thus, $\Psi(T,c',h) = \min\{f(T,c') + \mu(h), \nu(h)\} \geq \min\{f(T,c) - 1 + \mu(h), \nu(h) - 1\} = \min\{f(T,c) + \mu(h), \nu(h)\} - 1 = \Psi_0 - 1$.

    Case \ref{lem:potential:case-split} corresponds to a decision tree rooted at edge $e$ of weight $c(e) = 1$.
    It holds
    by definition of the search cost function that
    $f(T,c) \leq 1 + \max\{f(T_1, c|_{T_1}), f(T_2, c|_{T_2})\}$.
    Also, note that $\mu(h|_{T_i}) \geq \mu(h)$ and $\nu(h|_{T_i}) \geq \nu(h)$ since both $\mu$ and $\nu$ are minima, taken over a restricted domain. 
    We now compute
    \begin{equation*} \label{eq1}
      \begin{split}
        & \max\{\Psi(T_1,c|_{T_1}, h|_{T_1}), \Psi(T_2, c|_{T_2}, h|_{T_2})\} \\
        & = \max\Big\{ \min\{f(T_1, c|_{T_1}) + \mu(h|_{T_1}), \nu(h|_{T_1})\}, \min\{f(T_2, c|_{T_2}) + \mu(h|_{T_2}), \nu(h|_{T_2})\} \Big\} \\
        & \geq \max\Big\{ \min\{f(T_1, c|_{T_1}) + \mu(h), \nu(h)\}, \min\{f(T_2, c|_{T_2}) + \mu(h), \nu(h)\} \Big\} \\
        & = \min\Big\{ \max\{f(T_1, c|_{T_1}), f(T_2, c|_{T_2})\} + \mu(h), \nu(h)\Big\} \\
        & \geq \min\Big\{f(T,c) - 1 + \mu(h), \nu(h) - 1\Big\} \\
        & = \Psi_0 - 1.
      \end{split}
    \end{equation*}
    which shows the claim.
\end{proof}

To prove \cref{thm:equivalence:td_from_potential} by induction, we now combine the statements of \cref{lem:equivalence:minor-of-h,lem:equivalence:potential-of-h-prime}.
\begin{corollary}
\label{cor:induction-step}
    Let $H$ be a $(T,c,h,a)$-graph with $\Psi(T,c,h) \geq 2$.
    Then for every vertex $x \in V(H)$ it holds that $H - x$ contains as a minor a $(T',c',h',a)$-graph $H'$ with potential $\Psi(T',c',h') \geq \Psi(T,c,h) - 1$ and for which $c'(T') + ||h'|| \leq c(T) + ||h|| - 1$, where $||h|| := \sum_{v \in V(T)}h(v)$.
\end{corollary}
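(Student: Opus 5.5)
We combine \cref{lem:equivalence:minor-of-h} and \cref{lem:equivalence:potential-of-h-prime} case by case. Fix $x \in V(H)$. Since $\Psi(T,c,h) \geq 2$, \cref{lem:equivalence:minor-of-h} gives a minor $H'$ of $H - x$ of one of the three listed types. We treat each type separately, choosing $(T',c',h')$, reading off the potential bound from the matching case of \cref{lem:equivalence:potential-of-h-prime}, and checking the decrease of the size measure $c'(T') + ||h'||$.

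For type~1 we take $T' = T$, $c' = c$, and $h'$ as in the lemma. Case~\ref{lem:potential:case-m} of \cref{lem:equivalence:potential-of-h-prime} gives $\Psi(T,c,h') \geq \Psi(T,c,h) - 1$. Since $||h'|| = ||h|| - 1$, the measure drops by exactly one. For type~2 we take $T' = T$, $h' = h$, and $c'$ as in the lemma. Case~\ref{lem:potential:case-c} gives the potential bound, and $c'(T) = c(T) - 1$.

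For type~3, $H'$ is the disjoint union of a $(T_1, c|_{T_1}, h|_{T_1}, a)$-graph $H_1$ and a $(T_2, c|_{T_2}, h|_{T_2}, a)$-graph $H_2$. By case~\ref{lem:potential:case-split} of \cref{lem:equivalence:potential-of-h-prime}, some $i \in \{1,2\}$ satisfies $\Psi(T_i, c|_{T_i}, h|_{T_i}) \geq \Psi(T,c,h) - 1$. We take $H_i$ itself, which is a subgraph and hence a minor of $H - x$, as the required $(T',c',h',a)$-graph with $(T',c',h') = (T_i, c|_{T_i}, h|_{T_i})$. The removed edge $e$ has $c(e) = 1$ and contributes to $c(T)$ but not to $c'(T')$. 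Restriction does not increase any remaining edge weight or $h$-value, so $c'(T') + ||h'|| \leq c(T) + ||h|| - 1$.

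The main obstacle is the degenerate type~3 situation where one component is empty. The chosen index must not be the empty one, or at least that choice must still give a valid graph. If $H_2$ is empty, then $T_2 = \{v\}$ with $h(v) = 0$, and its potential is $\min\{0 + 0, \infty\} = 0$. Since $\Psi(T,c,h) - 1 \geq 1 > 0$, the maximum in case~\ref{lem:potential:case-split} must be attained by $T_1$. Hence we may choose $i = 1$, and $H'$ is then a genuine nonempty $(T',c',h',a)$-graph. The remaining cases need only routine bookkeeping.
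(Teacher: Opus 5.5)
Your proposal is correct and is essentially the paper's proof. The paper also applies the minor lemma, takes $H'$ directly in types 1 and 2, and in type 3 takes the component $H_i$ of larger potential, with the measure dropping because the removed edge has weight $1$. Your remark that the empty component has potential $0 < \Psi(T,c,h)-1$ is correct but not needed, since choosing the $H_i$ of larger potential already rules it out.
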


\begin{proof}
    Let $H$ and $x \in V(H)$ be given.
    Let $H''$ be the minor of $H - x$ given by \cref{lem:equivalence:minor-of-h}.
    If $H''$ is of type 1 or 2, define $H' := H''$.
    Then the claim holds by the first two cases of \cref{lem:equivalence:potential-of-h-prime}.
    If $H''$ is of type 3, then it is the disjoint union of a $(T_1, c|_{T_1}, h|_{T_1},a)$-graph $H_1$ and a $(T_2, c|_{T_2}, h|_{T_2},a)$-graph $H_2$.
    Define $H'$ as the $H_i$ with larger potential.
    Then, the claim holds by the third case of \cref{lem:equivalence:potential-of-h-prime}.
    In each case the bound on $c'(T') + ||h'||$ holds by definition of $H'$.
\end{proof}

\tdFromPotential*

\begin{proof}
    We prove the statement by induction over $c(T) + ||h||$, where $||h|| := \sum_{v \in V(T)} h(v)$.
    Observe that from $\Psi(T,c,h) \geq 1$ it follows that there exists a vertex $v \in V(T)$ with $h(v) \geq 1$.
    To see this, consider first the case that $T$ contains more than one vertex.
    Then $1 \leq \Psi(T,c,h) \leq \nu(h)$ which shows the claim.
    The other case is that $T$ contains only a single vertex $v$.
    Then $\nu(h) = \infty$ and so $1 \leq \Psi(T,c,h) = f(T,c) + \mu(h)$.
    However, $f(T,c) = 0$ in this case, since $T$ has no edge, and so it follows that $h(v) \geq 1$.

    From this, the case of $\Psi(T,c,h) = 1$ follows immediately.
    In this case $H$ contains an anchor attached to some $\wall_v$, thus $\td(H) \geq \td(\wall_a^{(2)})$.
    By \cref{obs:preliminaries:td-of-g-G}, the treedepth of the anchor $\wall_a^{(2)}$ is $\td(\wall_a) + 1 = A + 1$ and so it follows that $\td(H) \geq \td(\wall_a) + 1 = A + 1$.

    Now let $\Psi(T,c,h) \geq 2$.
    Note that $H$ is connected, and let $x$ be a vertex that attains the minimum in \cref{def:treedepth}.
    By \cref{cor:induction-step}, $H - x$ contains as a minor some $(T',c',h',a)$-graph $H'$ with $c'(T') + ||h'|| \leq c(T) + ||h|| - 1$ and with potential $\Psi(T', c', h') \geq \Psi(T,c,h) - 1 \geq 1$.
    By induction, $\td(H') \geq \Psi(T', c', h') + A \geq \Psi(T,c,h) - 1 + A$.
    It follows that $\td(H) = \td(H - x) + 1 \geq \td(H') + 1 \geq \Psi(T,c,h) - 1 + A + 1 = \Psi(T,c,h) + A$ which shows the claim.
\end{proof}

\newtheorem{question}{Question}

\section{Conclusion}
\label{sec:conclusion}
We showed that \tdp is \NP-hard under polynomial-time Turing reductions.
The degree bound is best possible, since graphs with maximum degree 2 are unions of paths and cycles, for which treedepth can be decided efficiently.
Nevertheless, some questions remain.
\begin{question}
    Is there an algorithm which for any $k$ computes the treedepth of the wall $\wall_k$ in time polynomial in $k$?
\end{question}

Such an algorithm would make our reduction a Karp reduction.
Note that the anchor-half is not required to be a wall.
If some different family of graphs with the required properties can be constructed, for which the treedepth can be determined exactly, this too would improve our reduction to a Karp reduction.
\begin{question}
    Is there an algorithm which constructs for each $k$ in time polynomial in $k$, some planar graph (ideally of maximum degree 3) whose size is polynomial in $k$ and whose treedepth is exactly $k$?
\end{question}

Beyond treedepth there is the well-known open problem of \emph{treewidth} on planar graphs.
\begin{question}
    Does there exist a polynomial-time algorithm which decides the treewidth of a planar graph?
    Is the problem \NP-hard instead?
\end{question}
It appears to be more difficult to prove \NP-hardness for planar treewidth, in part, because branchwidth can be computed in polynomial time on planar graphs~\cite{seymour1994call}.
Since branchwidth and treewidth differ by at most a factor of $\frac{3}{2}$, any hardness reduction would need to work around this fact.
Conversely, it may be the case that planar treewidth is in \P.

\bibliography{bibliography}

\end{document}